\newcommand{\Rset}{\mathbb{R}}
\newcommand{\Xset}{\mathbb{X}}
\newcommand{\Zset}{\mathbb{Z}}
\newcommand{\Pset}{\mathbb{P}}
\newcommand{\Uset}{\mathbb{U}}
\newtheorem{thm}{Theorem}
\newtheorem{lem}{Lemma}
\newtheorem{cor}{Corollary}
\newtheorem{prop}{Proposition}
\begin{document}

\title{Robust production planning with budgeted cumulative demand uncertainty}

\author[1]{Romain    Guillaume}
\author[2]{Adam Kasperski}
\author[2]{Pawe{\l} Zieli\'nski}

\affil[1]{Universit{\'e} de Toulouse-IRIT, Toulouse, France,
            \texttt{romain.guillaume@irit.fr}}    
\affil[2]{
Wroc{\l}aw  University of Science and Technology, Wroc{\l}aw, Poland,
            \texttt{\{adam.kasperski,pawel.zielinski\}@pwr.edu.pl}}

\date{}

\maketitle

\begin{abstract}
This paper deals with a  problem of   production planning, 
which is a version of the capacitated single-item lot sizing problem 
with backordering under demand uncertainty, modeled by uncertain 
cumulative demands.  The well-known interval budgeted uncertainty representation is assumed.
Two of its  variants  are considered. The first one is the discrete budgeted uncertainty,
in which
at most a specified  number of cumulative demands can deviate from their nominal values at the same time.
The second variant is the continuous budgeted uncertainty, in which 
the sum of the deviations of 
cumulative demands from their nominal values, at the same time, is at most 
a  bound  on the total deviation provided.
For both cases, in order to choose a production plan that hedges  against  the cumulative demand uncertainty,
 the robust minmax criterion is used.
 Polynomial algorithms for 
 evaluating the impact of uncertainty in the demand on a given production plan in terms of its cost,
 called the adversarial problem, and for
  finding  robust production plans  
  under the  discrete  budgeted uncertainty
  are constructed.
 Hence, in this case, the problems under consideration are
 not much computationally harder than their deterministic counterparts.
 For the continuous budgeted uncertainty, it is shown that the adversarial problem and the problem of computing 
 a robust production plan along with its worst-case cost  are NP-hard. In the case, when uncertainty intervals are non-overlapping, 
 they can be solved in pseudopolynomial time and admit 
  fully polynomial time
approximation schemes.  In the general case, a decomposition algorithm for
finding a robust plan is proposed.
\end{abstract}

\noindent\textbf{Keywords:} 
robustness and sensitivity analysis; production planning; demand uncertainty; robust optimization

\section{Introduction}
Production planning  under uncertainty is a fundamental and important managerial decision making
problem in various industries, among others, agriculture,  manufacturing, food and entertainment
ones (see, e.g.,~\cite{FJJXY19}).
Uncertainty  arises due to the versatility of a market and the bullwhip effect that increases  uncertainty through a supply chain (see~\cite{LPW97}). In consequence, it 
induces supply chain risks such as backordering and obsolete inventory,
accordingly, there is a need to face
 uncertainty  in planning processes, in order to manage these risks.

Nowadays most companies use the \emph{manufacturing resource planning} (MRP II)
for a manufacturing planning and control, which is
composed of three levels~\cite{ACC11}:
the  \emph{strategic level}  (production plan/resource plan), the  \emph{tactical level} 
(master production schedule/ rough-cut capacity plan and material requirement planning/capacity requirements plan) and the  \emph{operational level} (production activity control/capacity control).
 In this paper we will be concerned with the study of an impact of uncertainty 
in models that are  applicable to a production planning in the strategic level and/or a master production scheduling
 in the first level of the tactical one. More specifically, we will look more closely at 
 a  capacitated lot sizing problem with backordering under uncertainty.
  In the literature on production planning~(see, e.g.,~\cite{ADK14,DOL07,MRGL06,PMPL09,YWM98})
three different sources of uncertainty      
 such as: \emph{demand}, \emph{process} 
and \emph{supply} are distinguished. 
For the aforementioned strategic/tactical planning processes taking demand uncertainty into account plays a crucial role. Therefore, in this paper, we focus on uncertainty in the demand.
Namely, we deal with a version of the capacitated  lot sizing problem with backordering
under uncertainty in the demand.

Various models of demand uncertainty in lot sizing problems have been discussed in the literature so far,
each of which has its pros and cons.
Typically,  
 uncertainty  in demands (parameters)  is modeled by specifying a set~$\mathcal{U}$ of
all possible realizations  of the demands  (parameters), called \emph{scenarios}.
In \emph{stochastic} models,  demands
 are random variables with known probability distributions
 that induce a probability  distribution in set~$\mathcal{U}$ and
 the expected solution performance is commonly  optimized~\cite{S97,LRS07,HKMOS09,T13}.
 In  \emph{fuzzy} (\emph{possibilistic}) models,
 demands
 are modeled by
 fuzzy intervals,  regarded as \emph{possibility distributions}, describing the sets of
more or less plausible values of demands.
These fuzzy intervals induce a possibility distribution in scenario set~$\mathcal{U}$
and  some criteria  based on  \emph{possibility} measure are  optimized~\cite{GKZ12,MPP10,WL05}.

When no historical data or no information about plausible demand values
are available, required
 to
draw probability or possibility distributions
 for demands, an alternative way to handle demand uncertainty is
 a \emph{robust} model.
There are two common methods of defining scenario (uncertainty) set~$\mathcal{U}$ in this model, namely
the \emph{discrete} and \emph{interval uncertainty representations}.
 Under the discrete uncertainty representation, set~$\mathcal{U}$ is defined
by explicitly  listing all possible realizations  of  demand scenarios. While,
under the interval uncertainty representation, a closed interval is assigned to each demand,
which means that  it will take some value within this interval, but it is not possible
to predict  which one. Thus~$\mathcal{U}$ is the Cartesian product of the intervals.
In order to choose a solution, the \emph{minmax} or \emph{minmax regret} criteria
are usually  applied. As the result,
a solution minimizing its cost or opportunity loss under a worst scenario which may occur 
is computed (see, e.g.,~\cite{KY97}).
Under the discrete uncertainty representation of demands the minmax version of
a capacitated lot sizing problem turned out to be NP-hard even for  two demand scenarios
but can be solved in pseudopolynomial time, if the number of scenarios is  constant~\cite{KY97}.
When the number of scenarios is a part of the input, the is strongly NP-hard and
no  algorithm for solving it is know.
A situation is computationally much better for the interval representation of demands.
 It was shown in~\cite{GKZ12}  that
the minmax version of the lot sizing problems with backorders  can be efficiently solved.
Indeed,
the problem of computing a robust production plan with no capacity limits can be solved in $O(T)$ time,
where $T$ is the number of periods. For its capacitated  version an effective iterative  algorithm based on 
Benders' decomposition~\cite{B62}
 was provided. At  
  each iteration of the algorithm
  a worst-case 
scenario for a feasible production plan is computed 
 by a dynamic programming algorithm. Such a problem of evaluating a given production plan in terms of its
 worst-case cost is called the \emph{adversarial problem}. 
The  minmax regret version of two-stage uncapacitated lot sizing problems, studied in~\cite{Z11},
can be solved in $O(T^6\log T)$ time.

The minmax (regret) approach, commonly used in robust optimization,
 can lead to very conservative solutions.
 In~\cite{BS03, BS04} a \emph{budgeted uncertainty representation} was proposed,
 which addresses  
  this drawback under
 the interval uncertainty representation.
 It allows decision maker to flexibly control 
 the level of conservatism of  solutions computed  by specifying 
a parameter~$\Gamma$, called \emph{a budget or protection level}.
The intuition behind this was that it is unlikely that all parameters can deviate from their 
\emph{nominal} values at the same time.
The first  application of the budgeted uncertainty representation to
  lot sizing problems under demand uncertainty was proposed~\cite{BT06},
  where the resulting true minmax counterparts were approximated by
   linear programing problem (LP) and a mixed integer programing one (MIP).
  It is worth pointing out that
 the authors in~\cite{BT06} assumed a budgeted model, in which~$\Gamma$ 
 is an upper bound on the total scaled deviation of the demands (parameters) 
  from their nominal values under any demand scenario.
  Along the same line as in~\cite{BT06}, 
 the robust  optimization was adapted to periodic inventory control and production planning
 with uncertain product returns  and demand in~\cite{WYC11}, to 
 lot sizing combined with  cutting stock problems under  uncertain cost  and demand in~\cite{AM12},
 and to a production planning under make-to-stock policies in~\cite{APS18}.
 In \cite{ASNP16,AAAA17,BO08,SAP20} 
 the minmax lot sizing problems under demand uncertainty were solved  exactly 
 by  algorithms being variants of  Benders' decomposition. They consist in 
 iterative inclusion  of rows and columns, resulting from solving an  adversarial problem
  (see also~\cite{ZZ13}).
 Hence their effectiveness heavily relies on the computational  complexity of  adversarial problems.
  In~\cite{BO08}  the adversarial problems, related to computing basestock levels for
  a specific lot sizing problem,
  were solved by a dynamic programing algorithm 
  and by a MIP formulation. MIPs also model adversarial problems corresponding to
  lot sizing problems under demand uncertainty in~\cite{AAAA17,SAP20}.
  While 
  in~\cite{ASNP16}  a more general problem under parameter uncertainty containing, among others,
   a capacitated lot sizing problem under demand uncertainty,
  was examined and its  corresponding adversarial problem was solved by a dynamic programing algorithm 
  and by a fully polynomial approximation scheme (FPTAS). Therefore,
 a  pseudopolynomial algorithm and an FPTAS were proposed for a robust version of the  original general problem
 under consideration.

In most of the literature devoted to  robust 
lot sizing problems  the interval uncertainty representation is used to model uncertainty  in demands
(see, e.g., \cite{APS18,AM12,ASNP16,AAAA17,BO08,GKZ12,SAP20,WYC11,Z11}).
This is
not surprising, as it is one of the simplest and most natural ways of  handling the uncertainty.
In majority of papers the demand uncertainty   is interpreted as the uncertainty in actual demands in periods.
In this case, however, the uncertainty
cumulates in the subsequent periods due to the interval addition, which may be unrealistic in applications. In~\cite{GTZ17}  the uncertainty in cumulative demands was modeled, which resolves this problem. Indeed, it is able to take uncertain
demands in periods  as well as dependencies between the periods into account~\cite{GTG10}.

In this paper a capacitated  lot sizing problem 
with backordering under the cumulative demand uncertainty is discussed. The uncertainty in cumulative demands is modeled by using 
two variants of the interval budgeted uncertainty. In the first one, called the \emph{discrete budgeted uncertainty}~\cite{BS03, BS04},
at most a specified  number of cumulative demands~$\Gamma^d\in \Zset_{+}$ can deviate from their nominal values at the same time.
In the second variant, called the \emph{continuous budgeted uncertainty}~\cite{NO13}, 
the total  deviation of 
cumulative demands from their nominal values, at the same time, is at most~$\Gamma^c\in \Rset_{+}$.
The latter variant is similar to that considered in~\cite{BT06}, where the total scaled deviation is upper bounded
by~$\Gamma\in \Rset_{+}$, but it is different from the computational point of view.
To the best of our knowledge, the above lot sizing problem with  uncertain cumulative demands under
the discrete and continuous budgeted uncertainty has not been investigated in the literature so far.

\noindent\emph{Our contribution:} The purpose of this paper is not to motivate the robust 
approach in lot sizing problems (this was well done in other papers), but rather to provide a complexity characterization, containing
both positive and negative results for the problem under consideration.
For both variants of the budgeted uncertainty, we analyze the adversarial problem, denoted by  \textsc{Adv},
and the problem of finding a minmax (robust) production plan
along with its worst-case cost, denoted by \textsc{MinMax}.
We first consider the restrictive case, in which  the cumulative demand intervals are non-overlapping (it is actually possible 
for the master production scheduling) and we study then the general case in which the intervals can overlap.
Under the discrete budgeted uncertainty, we  provide polynomial algorithms for 
the \textsc{Adv} problem  and polynomial linear programming based methods for  the \textsc{MinMax}  problem
in the non-overlapping case. We then extend these results to the general case by showing a characterization of optimal  
 cumulative demand scenarios for~\textsc{Adv}. In consequence, 
under the discrete budgeted model
the \textsc{Adv} and \textsc{MinMax} problems can be solved efficiently.
Under the continuous budgeted uncertainty the situation is different. In particular, we prove that the \textsc{Adv} problem and, in consequence, the \textsc{MinMax}  one
are   NP-hard even in the non-overlapping case.
For the non-overlapping case, we construct  pseudopolynomial algorithms for the \textsc{Adv} problem
and propose a pseudopolynomial ellipsoidal algorithm and a linear programming program
with  pseudopolynomial number of constraints and variables for the \textsc{MinMax}  problem.
Moreover, we show that both problems admit an FPTAS.
In the general case, the \textsc{Adv} and \textsc{MinMax} problems still remain  NP-hard. Unfortunately, in this case
there is no easy characterization of vertex cumulative demand scenarios. Accordingly, 
 we  propose a MIP based approach for \textsc{Adv}
 and an exact solution algorithm, being a variant of  Benders' decomposition,  for \textsc{MinMax}.
 
 This paper is organized as follows.
 In Section~\ref{sPre} we formulate
 a deterministic capacitated  lot sizing problem with backordering and
 present a model of the cumulative demand uncertainty, i.e.
 the  discrete and continuous budgeted uncertainty representations,
 together with the \textsc{Adv} and \textsc{MinMax} problems corresponding to the  lot sizing problem.
 In Sections~\ref{sdbu} and~\ref{scbu} we study the problem under both budgeted uncertainty representations
 providing positive and negative results.
 We finish the paper with some conclusions in Section~\ref{con}.

\section{Problem formulation}
\label{sPre}
In this section we first recall a formulation of the  deterministic capacitated single-item lot sizing problem with backordering. Then we assume that demands are subject to uncertainty
and  present a model of uncertainty, called the \emph{budgeted uncertainty representation}.
In order to choose a robust production plan we apply the minmax criterion, commonly used in robust optimization.

\subsection{Deterministic production planning problem}

We are given $T$ periods, a \emph{demand}~$d_t\geq 0$ in each  period~$t$, $t\in[T]$ ($[T]$ denotes the set $\{1,\ldots,T\}$),
\emph{production}, \emph{inventory} and \emph{backordering costs}  and 
 a \emph{selling price},  denoted by  $c^P$, $c^I$, $c^B$ and~$b^P$, respectively, which do not depend on period~$t$.
Let  $x_{t}\geq 0$ be a \emph{production amount} in period~$t$, $t\in [T]$.
We assume that  the production amounts  $x_1,\ldots,x_T$, called the \emph{production plan}, can be under some linear constraints.
Namely,
let $\Xset\subseteq \Rset^T_{+}$ be a set of production plans, specified by  linear constraints, for instance:
\[
\Xset=\{\pmb{x}=(x_1,\ldots,x_{T}):\; x_t\geq 0,  l_{t}\leq x_{t}\leq u_{t}, 
                                              L_{t}\leq \sum_{i\in [t]}x_{i}\leq U_{t},
                                              t\in [T]\}\subseteq \Rset^T_{+},                                      
\]
where $l_{t}$, $u_{t}$ and $L_{t}$, $U_{t}$
are prescribed  \emph{capacity} and \emph{cumulative capacity limits}, respectively.
Accordingly, we wish to   find a feasible production plan  
$\pmb{x}\in \Xset$, subject to the conditions of satisfying each demand and the capacity limits, which  minimizes the total
production, storage and backordering costs minus the benefit from  selling the product. 

Set $D_{t}=\sum_{i\in [t]}d_{i}$ and
 $X_{t}=\sum_{i\in [t]}x_{i}$, i.e.
 $D_{t}$ and  $X_{t}$ stand for
 the \emph{cumulative demand} up to period~$t$
and the \emph{cumulative production} up to period~$t$, respectively.
We do not examine additional production processes, for example with setup times and costs, 
which lead to NP-hard problems even for special cases 
(see, e.g.,~\cite{FLK80,CT90}).
The problem under consideration is a version of 
the \emph{capacitated single-item lot sizing problem with backordering} (see, e.g.,~\cite{ABDN17,PW06}).
It
can be represented by the following linear program:
\begin{align}
 \min &\sum_{t\in [T]}(c^{I} I_{t}+ c^B B_{t}+c^Px_t-b^P s_t)&\label{spp1}\\ 
\text{ s.t. }& B_{t}- I_{t}=D_{t}-X_t & t \in [T],\label{spp2}\\
 &\sum_{i\in [t]} s_{i}=D_{t}-B_{t} & t \in [T],\label{spp3}\\
 &X_t=\sum_{i\in [t]} x_i& t \in [T],\label{spp3a}\\
        &B_{t},I_{t},s_t\geq 0  & t\in [T],\label{spp4}\\
        &\pmb{x}\in \Xset\subseteq \Rset^T_{+}, \label{spp5}
\end{align}
where $I_{t}$, $B_{t}$ and $s_{t}$ are
\emph{inventory level}, \emph{backordering  level} and  \emph{sales
 of the product}  at the end of   period $t\in [T]$, respectively.
 We assume that the initial inventory and  backorder levels are equal to~0.
There is an optimal solution to (\ref{spp1})-(\ref{spp5}) which satisfies $B_tI_t=0$ for each $t\in [T]$, so inventory storage  from period~$t$ to period~$t+1$
and backordering  from period~$t+1$ to
period~$t$ are not performed simultaneously. Indeed, if $B_t>0$ and $I_t>0$, then we can modify the solution so that $B_t=0$ or $I_t=0$ without violating the constraints~(\ref{spp2})-(\ref{spp3}) and increasing the objective value.
Using this observation, we can rewrite (\ref{spp1})-(\ref{spp5}) in the following equivalent compact form,
which is more convenient to analyze:
\begin{equation}
\min_{\pmb{x}\in \Xset} \left(\sum_{t\in [T]}\max\{ c^{I}(X_t-D_t), c^{B}(D_t-X_t)\}+c^PX_T
     -b^P\min\{ X_T,D_T\}\right).
\label{ddls}
\end{equation}
Define
$$
f_I(X_t,D_t)=\left\{\begin{array}{lll}
				c^I(X_t-D_t) &\text{if } t\in [T-1],\\
				c^I(X_t-D_t)+c^PX_T-b^PD_t &\text{if } t=T,
				\end{array}\right.
				$$

$$
f_B(X_t,D_t)=\left\{\begin{array}{lll}
				c^B(D_t-X_t) &\text{if }  t\in [T-1],\\
				c^B(D_t-X_t)+c^PX_T-b^PX_t &\text{if } t=T.
				\end{array}\right.
				$$
Hence, after considering two cases, namely $X_t\leq D_t$ and $X_t>D_t$, for each $t\in [T]$,  problem~(\ref{ddls})
 can be represented as follows
\begin{equation}
\min_{\pmb{x}\in \Xset} \sum_{t\in [T]}\max\{f_I(X_t,D_t), f_B(X_t,D_t) \}.
\label{ddls1}
\end{equation}
From now on, we will refer to~(\ref{ddls1}) instead of (\ref{spp1})-(\ref{spp5}). Observe that $f_I(X_t,D_t)$ is nonincreasing function of $D_t$, $f_B(X_t,D_t)$ is nondecreasing function of $D_t$.
Furthermore, the function $\max\{f_I(X_t,D_t), f_B(X_t,D_t)\}$  is convex in $X_t$ and $D_t$, since
 both functions~$f_I$ and~$f_B$ are linear in $X_t$ and $D_t$.

\subsection{Robust production planning problem}

We now admit that  demands in problem~(\ref{ddls1}) are subject to uncertainty. 
In practice, 
a knowledge about uncertainty in a demand is  often expressed as
$\pm\Delta$, where $\Delta$ is a possible deviation from a nominal demand value. This 
means that the actual demand
will take some value within the interval determined by~$\pm\Delta$, but it is not possible  to
predict at present which one. In consequence, a simple and
 natural \emph{interval uncertainty representation} is induced.

A demand has a twofold interpretation, namely an actual demand $d_t$ in period~$t$ or a cumulative demand $D_t$
up to period~$t$, $t\in[T]$.
The former interpretation is often considered in the literature.
However, in this case the deviations
$\Delta$ cumulate  in subsequent periods, due to the interval addition,
$\widetilde{D}_t=\widetilde{\sum}_{t\in[T]}\tilde{d}_t$, $d_t\in \tilde{d}_t= [\hat{d}_t-\Delta,\hat{d}_t+\Delta]$,
where $\hat{d}_t $ is the nominal value of the demand in period~$t$. This may be unrealistic in practice, because the deviation for cumulative demand in period $t$ becomes $t\Delta$  (see Figure~\ref{fdemands}a).
Therefore, in this paper we use a model of uncertainty in  the cumulative demands rather than in actual demands, 
expressed by  symmetric intervals~$\widetilde{D}_t= [\widehat{D}_t-\Delta, \widehat{D}_t+\Delta]$, $t\in[T]$, prescribed,
where $\widehat{D}_t$ is the nominal  value of the cumulative demand up to period $t\in [T]$ (see Figure~\ref{fdemands}b).
Such a model  is able  to take the uncertain demands in periods  into account
($\tilde{d}_t$ lies in $\widetilde{D}_t$)
 as well as dependencies between periods. We allow different deviations for each period.
Accordingly, the value of cumulative demand~$\widetilde{D}_t$ is only known to belong to the interval
$[\widehat{D}_t-\Delta_t, \widehat{D}_t+\Delta_t]$, $t\in [T]$, 
where $ 0\leq\Delta_t\leq \widehat{D}_t$ is the maximum deviation of the cumulative demand from its nominal value~$\widehat{D}_t\geq 0$. 
Each feasible vector $\pmb{D}=(D_1,\ldots,D_T)$ of cumulative demands, called \emph{scenario},
 must satisfy the following two reasonable
 constraints: $D_t\geq 0$ for every $t\in [T]$ and $D_t\leq D_{t+1}$ for every  $t\in [T-1]$. Since the vector of  nominal cumulative demands should be feasible, we also assume that $\widehat{D}_t\leq \widehat{D}_{t+1}$, $t\in [T-1]$.
Notice that scenario $(D_1,\ldots,D_T)$
  induces  a vector of actual demands  in periods~$t$, i.e. 
  $d_1=D_1$,
   $d_t=D_t-D_{t-1}$, $t=2,\ldots,T$.
 \begin{figure}[ht]
 \centering
 \includegraphics[scale=0.35]{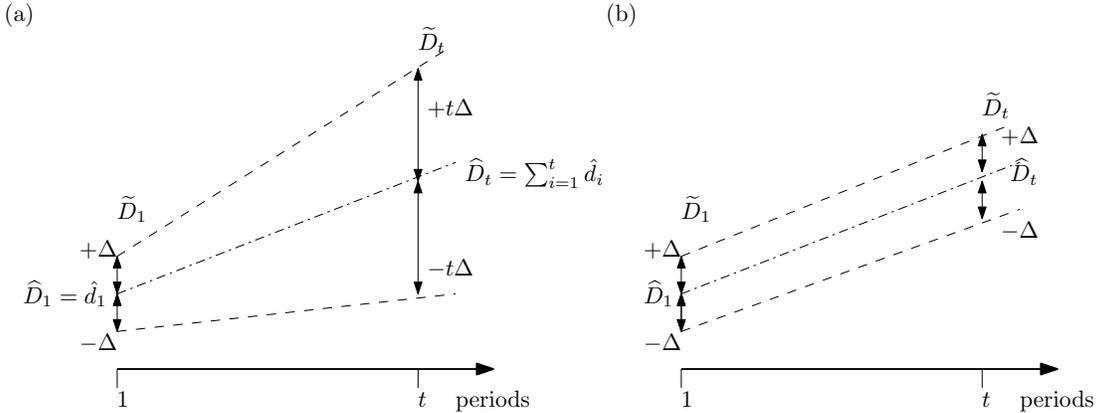}
  \caption{(a)  Demands in  periods under the interval uncertainty. 
                (b)  Cumulative demands under the interval uncertainty. } \label{fdemands}
 \end{figure}

 In this paper we study the following two special cases
of the interval, symmetric uncertainty representations, that are common  in robust optimization,
in which \emph{scenario sets} are defined in the following way
(see, e.g.,~\cite{BS03, BS04,NO13}):
\begin{align}
\mathcal{U}^d&=\{\pmb{D}=(D_t)_{t\in[T]}\,:\,D_t\leq D_{t+1},
D_t\in [\widehat{D}_t-\Delta_t,\widehat{D}_t+\Delta_t], |\{t\,:\, D_t\neq \widehat{D}_t\}|\leq \Gamma^d
\},\label{dr}\\
\mathcal{U}^c&=\{\pmb{D}=(D_t)_{t\in[T]}\,:\,D_t\leq D_{t+1},
D_t\in [\widehat{D}_t-\Delta_t,\widehat{D}_t+\Delta_t],\parallel \pmb{D}-\widehat{\pmb{D}}\parallel _1 \leq \Gamma^c
\}. \label{cr}
\end{align}
The first representation~(\ref{dr}) is called the
\emph{discrete budgeted uncertainty}, where  $\Gamma^d\in  \{0\} \cup [T]$, and 
the second one~(\ref{cr}) is called the \emph{continuous budgeted uncertainty}, where $\Gamma^c\in \Rset_{+}$. The parameters $\Gamma^d$ and $\Gamma^c$, called \emph{budgets} or \emph{protection levels},
control the amount of uncertainty in $\mathcal{U}^d$ and $\mathcal{U}^c$, respectively. If $\Gamma^d=\Gamma^c=0$, then all cumulative demands take their nominal values (there is only one scenario). On the other hand, for sufficiently large $\Gamma^d$ and $\Gamma^c$ the uncertainty sets $\mathcal{U}^d$ and $\mathcal{U}^c$ are the Cartesian products of $[\widehat{D}_t-\Delta_t, \widehat{D}_t+\Delta_t]$, $t\in [T]$, which yields the interval uncertainty representation discussed in~\cite{KY97}.

In order to compute a \emph{robust production plan}, we adopt the \emph{minmax} 
approach (see, e.g.,~\cite{KY97}), in which we seek a plan minimizing the maximal total cost over all cumulative demand scenarios.
This leads to the 
following minmax  problem:
\begin{equation}
\textsc{MinMax}:\; OPT=\min_{\pmb{x}\in \Xset}\max_{\pmb{D}\in \mathcal{U}} 
\sum_{t\in [T]} \max\{f_I(X_t,D_t), f_B(X_t,D_t) \} ,
     \label{pmm}
\end{equation}
where $\mathcal{U}\in \{\mathcal{U}^d,\mathcal{U}^c\}$, i.e. to the one of computing
 an optimal production plan~$\pmb{x}^*$   of~(\ref{pmm})   along with its worst-case cost~$OPT$.
Indeed, $\pmb{x}^*$ is a robust choice, because we are sure that it optimizes against all scenarios
in $\mathcal{U}$
in which  the amount of uncertainty allocated by  an adversary to  cumulative demands is upper
bounded by a budget  provided. Furthermore, the budget enables to control the level of robustness of a 
production plan computed. More specifically,
an optimal production plan to  the \textsc{MinMax} problem under $\mathcal{U}^d$  optimizes against all scenarios,
in which at most $\Gamma^d$ cumulative demands take values different from their nominal ones at the same time.
Moreover, by changing the value of~$\Gamma^d$, from~$0$ to~$T$,
 one can flexibly control the level of robustness of the  plan computed.
 An optimal production plan 
to the \textsc{MinMax} problem  under $\mathcal{U}^c$  optimizes against all scenarios
in $\mathcal{U}^c$
in which
the total deviation of the
cumulative demands from their nominal values, at the same time, is at most 
a  bound  on the total variability~$\Gamma^c$.
In this case  one can also flexibly control the level of robustness of the  plan computed
 by changing the value of~$\Gamma^c$ from~$0$ to a big number, say $\sum_{t\in [T]} \Delta_t$.

The \textsc{MinMax} problem contains the inner
 \emph{adversarial problem}, i.e.
\begin{equation}
\textsc{Adv}: \; \max_{\pmb{D}\in \mathcal{U}} 
\sum_{t\in [T]}\max\{f_I(X_t,D_t), f_B(X_t,D_t) \} .
\label{padv}
\end{equation}
The \textsc{Adv} problem consists in finding a cumulative demand scenario $\pmb{D}\in \mathcal{U}$ that
maximizes the cost of  a given production plan $\pmb{x}\in \Xset$ over scenario set~$\mathcal{U}$.
In other words, an \emph{adversary}  maliciously wants to increase the cost of~ $\pmb{x}$.

 Throughout this paper, we study the \textsc{Adv} and  \textsc{MinMax} problems under two standing assumptions
 about cumulative demand uncertainty intervals~$\widetilde{D}_t$, $t\in[T]$.
 Under the first one, the intervals 
  are \emph{non-overlapping}, i.e.
$\widehat{D}_{t}+\Delta_t\leq \widehat{D}_{t+1}-\Delta_{t+1}$, $t\in [T-1]$.
This assumption is realistic, in particular at the tactical level of planning, for instance
 in the master production scheduling (MPS)  (see, e.g.~\cite{ACC11}),
where the lengths of periods are big enough (for example, they are equal to one month).
This restriction leads to more efficient methods of solving the problems under consideration. 
Notice that it allows us to drop the constraints $D_t\leq D_{t+1}$, $t\in [T-1]$, in the definition of scenario sets $\mathcal{U}^d$ and $\mathcal{U}^c$.
Under the second assumption,  called the \emph{general case},  we impose no restrictions on the interval bounds, 
i.e. they can now overlap.

\section{Discrete budgeted uncertainty}
\label{sdbu}

In this section we consider the problem with uncertainty set~$\mathcal{U}^d$ defined as~(\ref{dr}). We will discuss the \textsc{Adv} and the \textsc{MinMax} problems, i.e. the problems of evaluating a given production plan and computing the robust production plan, respectively. We provide polynomial algorithms for solving both problems.

\subsection{Non-overlapping case}

In this section we consider the non-overlapping case, i.e we assume that $\widehat{D}_t+\Delta_t\leq \widehat{D}_{t+1}-\Delta_{t+1}$ for each $t\in [T-1]$. We first focus on the  \textsc{Adv} problem, i.e the inner one of  \textsc{MinMax}.
\begin{lem} 
The \textsc{Adv} problem under $\mathcal{U}^d$, for the non-overlapping case, 
boils down to the following problem:
\begin{align}
 \max & \sum_{t\in [T]} \max\{f_I(X_t,\widehat{D}_t-\delta_t\Delta_t), f_B(X_t,\widehat{D}_t+\delta_t\Delta_t) \} \label{aud1}\\
\text{\rm s.t. }& \sum_{t\in [T]} \delta_t\leq \Gamma^d, & \label{aud2}\\
                &0\leq \delta_t\leq 1,\;\;t\in [T].\label{aud3}
\end{align}
Moreover (\ref{aud1})-(\ref{aud3}) has an integral optimal solution $\pmb{\delta}^*$ such
that $\sum_{t\in [T]} \delta^{*}_t= \Gamma^d$.
\label{ladud}
\end{lem}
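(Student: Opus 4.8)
The plan is to exploit two ingredients already recorded before the lemma: in the non-overlapping case the scenario set decouples over periods, and for each fixed $t$ the per-period cost $\max\{f_I(X_t,D_t),f_B(X_t,D_t)\}$ is convex in $D_t$ while $f_I(X_t,\cdot)$ is nonincreasing and $f_B(X_t,\cdot)$ is nondecreasing. First I would note that the assumption $\widehat D_t+\Delta_t\le\widehat D_{t+1}-\Delta_{t+1}$ makes every vector $\pmb D$ with $D_t\in[\widehat D_t-\Delta_t,\widehat D_t+\Delta_t]$ automatically nondecreasing, so the constraints $D_t\le D_{t+1}$ are redundant in $\mathcal U^d$. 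Consequently a scenario is just a choice of a set $S\subseteq[T]$ with $|S|\le\Gamma^d$ together with a choice of $D_t$ in its interval for $t\in S$ (and $D_t=\widehat D_t$ for $t\notin S$), and \textsc{Adv} separates over periods:
\[
\textsc{Adv}=\max_{|S|\le\Gamma^d}\Bigl(\textstyle\sum_{t\notin S}\max\{f_I(X_t,\widehat D_t),f_B(X_t,\widehat D_t)\}+\sum_{t\in S}\max_{D_t}\max\{f_I(X_t,D_t),f_B(X_t,D_t)\}\Bigr),
\]
the inner maximum over $S$ being achieved by the obvious scenario (feasible since $|S|\le\Gamma^d$ and monotonicity is free).

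Next I would evaluate the single-period inner maximum $\max_{D_t\in[\widehat D_t-\Delta_t,\widehat D_t+\Delta_t]}\max\{f_I(X_t,D_t),f_B(X_t,D_t)\}$. By convexity in $D_t$ it is attained at an endpoint, and by the monotonicities the $f_I$-value is largest at $\widehat D_t-\Delta_t$ while the $f_B$-value is largest at $\widehat D_t+\Delta_t$, so this quantity equals $\max\{f_I(X_t,\widehat D_t-\Delta_t),f_B(X_t,\widehat D_t+\Delta_t)\}$. Setting $g_t(\delta)=\max\{f_I(X_t,\widehat D_t-\delta\Delta_t),f_B(X_t,\widehat D_t+\delta\Delta_t)\}$, the per-period cost of a deviating period is $g_t(1)$ and that of a non-deviating period is $g_t(0)$; hence \textsc{Adv} equals the optimum of (\ref{aud1})--(\ref{aud3}) restricted to $\pmb\delta\in\{0,1\}^T$, i.e. $\max_{|S|\le\Gamma^d}\bigl(\sum_{t\in S}g_t(1)+\sum_{t\notin S}g_t(0)\bigr)$.

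It then remains to show the continuous program (\ref{aud1})--(\ref{aud3}) has an integral optimum and that $\sum_t\delta_t^*=\Gamma^d$ can be imposed. Each $g_t$ is convex in $\delta$ (a maximum of two functions affine in $\delta$) and nondecreasing on $[0,1]$ (compose the monotonicities of $f_I,f_B$ with $\delta\mapsto\widehat D_t\mp\delta\Delta_t$), so the objective $\sum_t g_t(\delta_t)$ is convex on the polytope $P=\{\pmb\delta:0\le\delta_t\le1,\ \sum_t\delta_t\le\Gamma^d\}$ and attains its maximum at a vertex of $P$. A counting-of-tight-constraints argument shows that, since $\Gamma^d$ is an integer, every vertex of $P$ is integral (if $\sum_t\delta_t\le\Gamma^d$ is slack, all $T$ tight constraints are box constraints; if it is tight, $T-1$ coordinates are in $\{0,1\}$ and the last is forced to be an integer in $[0,1]$). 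Hence the relaxation value equals the integral value, which equals \textsc{Adv}. (Alternatively: convexity gives $g_t(\delta_t)\le(1-\delta_t)g_t(0)+\delta_t g_t(1)$, so $\sum_t g_t(\delta_t)\le\sum_t g_t(0)+\sum_t\delta_t\bigl(g_t(1)-g_t(0)\bigr)$ with nonnegative coefficients, a fractional-knapsack bound maximized, for integer $\Gamma^d$, by the integral solution selecting the $\Gamma^d$ largest gains.) Finally, since $g_t(1)\ge g_t(0)$, flipping any $\delta_t=0$ to $\delta_t=1$ in an integral optimum does not decrease the objective; because $\Gamma^d\le T$ we may keep flipping until $\sum_t\delta_t^*=\Gamma^d$, yielding the claimed integral optimum that saturates the budget.

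The only genuinely non-routine step is the relaxation-tightness argument in the previous paragraph: one must verify convexity of each $g_t$ in $\delta$ and either the integrality of the vertices of $P$ for integral $\Gamma^d$ or the exactness of the convex upper bound together with the fractional-knapsack rounding. Everything else --- discarding the monotonicity constraints, separating over periods, and collapsing each interval to the pair of dominant endpoints --- is immediate from the non-overlapping hypothesis and the convexity/monotonicity properties of $f_I$ and $f_B$ stated just before the lemma.
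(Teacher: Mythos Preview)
Your proof is correct and follows essentially the same strategy as the paper: both arguments rest on the convexity of the objective in $\pmb\delta$, the integrality of the vertices of the budget polytope (the paper invokes total unimodularity, you count tight constraints directly), and the non-overlapping assumption to pass between $\pmb\delta$ and a feasible scenario $\pmb D$. The only organizational difference is that you first identify \textsc{Adv} with the $\{0,1\}$-restricted program and then argue LP tightness, whereas the paper starts from the LP and bounds \textsc{Adv} from both sides; your explicit monotonicity argument for $g_t$ to justify budget saturation is in fact slightly more careful than the paper's one-line claim.
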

\begin{proof}
The objective~(\ref{aud1}) is a convex function with respect to~$\pmb{\delta}\in [0,1]^T$. Hence,
it
attains the maximum value at a vertex of the convex  polytope (\ref{aud2})-(\ref{aud3})  (see, e.g.,~\cite{M75}). Moreover, since
 the matrix of constraints~(\ref{aud2}) and
(\ref{aud3}) is totally unimodular and~$\Gamma^d\in\Zset_{+}$,  an optimal solution~$\pmb{\delta}^*$ to (\ref{aud1})-(\ref{aud3}) 
is integral and
has exactly $\Gamma^d$
components equal to~1. 

Let $\pmb{D}^{'}\in \mathcal{U}^d$ be an optimal solution of the \textsc{Adv} problem.
This solution can be expressed by a feasible solution~$\pmb{\delta}^{'}$ to (\ref{aud1})-(\ref{aud3}).
Indeed, $D^{'}_t=\widehat{D}_t\pm \delta^{'}_t\Delta_t$, $t\in[T]$, $\pmb{\delta}^{'}\in [0,1]^T$.
Obviously $\sum_{t\in [T]} \delta^{'}_t\leq \Gamma$, since $|\{t\,:\,D_t\neq \widehat{D}_t\}|\leq \Gamma^d$.
The value of the objective function of~(\ref{padv}) for $\pmb{D}^{'}$ can be bounded from above by
the value of~(\ref{aud1}) for $\pmb{\delta}^{'}$ and, in consequence, by  the value of~(\ref{aud1}) for $\pmb{\delta}^{*}$.
Let us form the cumulative demand scenario~$\pmb{D}^{*}$, that corresponds to~$\pmb{\delta}^{*}$,
 by setting 
 \begin{equation}
D^{*}_t=
\begin{cases}
\widehat{D}_t-\delta^{*}_t\Delta_t  &\text{if
$f_I(X_t,\widehat{D}_t-\delta^{*}_t\Delta_t)>f_B(X_t,\widehat{D}_t+\delta^{*}_t\Delta_t),$}\\
\widehat{D}_t+\delta^{*}_t\Delta_t&\text{otherwise}.
\end{cases}
\label{optadvd}
\end{equation}
Since $\widehat{D}_t+\Delta_t\leq \widehat{D}_{t+1}-\Delta_{t+1}$ for each $t\in [T-1]$ (by the assumption that we consider the non-overlapping case), we get $D^*_t\leq D^*_{t+1}$ and thus $\pmb{D}^{*}\in \mathcal{U}^d$.
We see  at once that
the optimal value of~(\ref{aud1}) for $\pmb{\delta}^{*}$ is equal to 
the value of the objective function of~(\ref{padv}) for~$\pmb{D}^{*}$.
By the optimality of $\pmb{D}^{'}$, this value is bounded from above by 
the value of the objective function of~(\ref{padv}) for $\pmb{D}^{'}$.
Hence  $\pmb{D}^{*}$ is an optimal solution to~(\ref{padv}) as well, which proves the lemma.
\end{proof}
From Lemma~\ref{ladud} and the integrality of~$\pmb{\delta}^*$, it follows that  an optimal solution~$\pmb{D}^{*}$  to~(\ref{padv})
is such that $D_t^{*}\in \{ \widehat{D}_t-\Delta_t, \widehat{D}_t,   \widehat{D}_t+\Delta_t\}$
for every $t\in[T]$. Accordingly, 
we can  provide an algorithm for finding an optimal solution to (\ref{aud1})-(\ref{aud3}).
An easy computation shows that 
the objective function~(\ref{aud1}) can be rewritten as follows:
\begin{equation}
\sum_{t\in [T]}\max\{f_I(X_t,\widehat{D}_t), f_B(X_t,\widehat{D}_t) \} +\sum_{t\in [T]} c_t\delta_t, \label{advfd}
\end{equation}     
where $$c_t=\max\{f_I(X_t,\widehat{D}_t-\Delta_t), f_B(X_t,\widehat{D}_t+\Delta_t) \}-\max\{f_I(X_t,\widehat{D}_t), f_B(X_t,\widehat{D}_t) \},\; t\in [T],$$ 
are fixed coefficients.
The first sum in~(\ref{advfd}) is constant. 
Therefore, in order to solve \textsc{Adv}, we need to solve the following problem:
\begin{align}
 \max &\sum_{t\in [T]} c_t\delta_t \label{caud1}\\ 
 \text{s.t. }& \sum_{t\in [T]} \delta_t\leq \Gamma^d, & \label{caud2}\\
                &0\leq \delta_t\leq 1,\;\;t\in [T].\label{caud3}
\end{align}
Problem (\ref{caud1})-(\ref{caud3}) can be solved in~$O(T)$ time. Indeed,
we first find, in $O(T)$ time (see, e.g.,~\cite{KV12}),
the $\Gamma^d$th largest coefficient, denoted by~$c_{\sigma(\Gamma^d)}$, such that $c_{\sigma(1)}\geq\cdots\geq c_{\sigma(\Gamma^d)}\geq
\cdots\geq c_{\sigma(T)}$, where $\sigma$ is a permutation of~$[T]$.
Then   having $c_{\sigma(\Gamma^d)}$ we can choose  $\Gamma^d$ coefficients~$c_{\sigma(i)}$, $i \in [\Gamma^d]$,
and set $\delta^{*}_{\sigma(i)}=1$. Having the optimal solution $\pmb{\delta}^*$, we can construct the corresponding scenario $\pmb{D}^*$ as in the proof of Lemma~\ref{ladud} (see~(\ref{optadvd})).
Hence, we get the following theorem.
\begin{thm}
The \textsc{Adv} problem  under $\mathcal{U}^d$, for the non-overlapping case,
 can be solved in $O(T)$ time.
\end{thm}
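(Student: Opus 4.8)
The plan is to assemble facts already in hand. By Lemma~\ref{ladud}, the \textsc{Adv} problem under $\mathcal{U}^d$ in the non-overlapping case is equivalent to the program (\ref{aud1})--(\ref{aud3}), and by the rewriting (\ref{advfd}) the latter amounts, up to the additive constant $\sum_{t\in[T]}\max\{f_I(X_t,\widehat{D}_t),f_B(X_t,\widehat{D}_t)\}$, to solving the linear program (\ref{caud1})--(\ref{caud3}). So it suffices to show that (i) carrying out this reduction costs $O(T)$, (ii) (\ref{caud1})--(\ref{caud3}) can be solved in $O(T)$ time, and (iii) recovering a worst-case scenario $\pmb{D}^*$ from the optimal $\pmb{\delta}^*$ costs another $O(T)$; adding the three phases then yields the bound.

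For (i) I would first compute $X_1,\dots,X_T$ from $\pmb{x}$ by prefix summation in $O(T)$; each coefficient $c_t$ then needs only a constant number of evaluations of the explicitly given linear functions $f_I,f_B$, so the whole vector $(c_t)_{t\in[T]}$ and the additive constant are produced in $O(T)$. I would also record that $c_t\ge 0$ for every $t$: since $f_I(X_t,\cdot)$ is nonincreasing and $f_B(X_t,\cdot)$ nondecreasing, we have $f_I(X_t,\widehat{D}_t-\Delta_t)\ge f_I(X_t,\widehat{D}_t)$ and $f_B(X_t,\widehat{D}_t+\Delta_t)\ge f_B(X_t,\widehat{D}_t)$, so the maximum over the wider interval dominates. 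This is consistent with the last sentence of Lemma~\ref{ladud} and it keeps the greedy analysis below clean.

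For (ii), the feasible region of (\ref{caud1})--(\ref{caud3}) is $[0,1]^T\cap\{\pmb{\delta}:\sum_{t\in[T]}\delta_t\le\Gamma^d\}$; with a linear objective having nonnegative coefficients and the single cardinality bound $\Gamma^d\in\{0\}\cup[T]$, an optimal solution sets $\delta^*_t=1$ on a set of $\Gamma^d$ indices carrying the largest values of $c_t$ and $0$ elsewhere (any feasible point not of this form is improved by shifting weight toward larger coefficients, and integrality of an optimum is anyway guaranteed by total unimodularity of the constraint matrix). The one point needing care for the running time is to avoid sorting: using a linear-time selection procedure (e.g.~\cite{KV12}) I would locate the $\Gamma^d$-th largest coefficient $c_{\sigma(\Gamma^d)}$ in $O(T)$, then in a single pass pick $\Gamma^d$ indices with $c_t\ge c_{\sigma(\Gamma^d)}$, breaking ties arbitrarily, and set their $\delta^*_t=1$.

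Finally, for (iii) I would build $\pmb{D}^*$ exactly as in (\ref{optadvd}), comparing $f_I(X_t,\widehat{D}_t-\delta^*_t\Delta_t)$ with $f_B(X_t,\widehat{D}_t+\delta^*_t\Delta_t)$ for each $t$; this is $O(T)$, and by the argument in the proof of Lemma~\ref{ladud} (where the non-overlapping hypothesis $\widehat{D}_t+\Delta_t\le\widehat{D}_{t+1}-\Delta_{t+1}$ is used to get $D^*_t\le D^*_{t+1}$) the resulting $\pmb{D}^*$ lies in $\mathcal{U}^d$ and is optimal for \textsc{Adv}. I do not anticipate a genuine obstacle: the substantive work is already done in Lemma~\ref{ladud}, and the only thing to be deliberate about is replacing a sort by linear-time selection so that the selection step is truly $O(T)$ rather than $O(T\log T)$.
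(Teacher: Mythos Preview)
Your proposal is correct and follows essentially the same route as the paper: reduce \textsc{Adv} to (\ref{caud1})--(\ref{caud3}) via Lemma~\ref{ladud} and (\ref{advfd}), solve the resulting cardinality-constrained selection problem in $O(T)$ by linear-time selection of the $\Gamma^d$-th largest coefficient, and recover $\pmb{D}^*$ via (\ref{optadvd}). Your write-up is in fact a bit more explicit than the paper's (prefix sums for $X_t$, the observation $c_t\ge 0$), but the underlying argument is the same.
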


We now show how to solve the \textsc{MinMax} problem for the non-overlapping case in polynomial time. We will first reformulate \textsc{Adv} as a linear programming problem with respect to $X_t$. Then, the linearity will be preserved by adding linear constraints for $X_t$.
Writing the dual to~(\ref{caud1})-(\ref{caud3}), we obtain:
\begin{align}
 \min \;&\Gamma^d \alpha+\sum_{t\in [T]} \gamma_t& \label{dcaud1}\\ 
 \text{s.t. }& \alpha+\gamma_t\geq c_t, & t\in[T], \label{dcaud2}\\
                &\gamma_t\geq 0, &t\in [T], \label{dcaud3}\\
                &\alpha\geq 0,& \label{dcaud4}
\end{align}
where $\alpha$ and $\gamma_t$ are dual variables.
Using (\ref{dcaud1})-(\ref{dcaud4}), equality~(\ref{advfd}), and the definition of $c_t$, we can rewrite (\ref{aud1})-(\ref{aud3}) as:
\begin{align}
 \min \;&\sum_{t\in [T]} \pi_t+\Gamma^d \alpha+\sum_{t\in [T]} \gamma_t& \label{ddcaud1}\\ 
 \text{s.t. }& \pi_t\geq f_I(X_t, \widehat{D}_t), & t\in[T], \label{ddcaud2}\\
                & \pi_t\geq f_B(X_t, \widehat{D}_t), & t\in[T], \label{ddcaud3}\\
                & \alpha+\gamma_t\geq  f_I(X_t,\widehat{D}_t-\Delta_t)-\pi_t, & t\in[T], \label{ddcaud6}\\
                & \alpha+\gamma_t\geq   f_B(X_t, \widehat{D}_t+\Delta_t)-\pi_t, & t\in[T], \label{ddcaud7}\\
                &\alpha,\gamma_t\geq 0,&t\in [T], \label{ddcaud10}\\
                &\pi_t \text{ unrestricted},&t\in [T], \label{ddcaud11}
\end{align}
where constraints~(\ref{ddcaud2})-(\ref{ddcaud3}) specify the maximum operators in the first sum of~(\ref{advfd}) and the
remaining constraints model~(\ref{dcaud2})-(\ref{dcaud4}) and the coefficients~$c_t$, $t\in [T]$. We now prove that (\ref{ddcaud1})-(\ref{ddcaud11}) 
solves the \textsc{Adv} problem.

 \begin{lem}
The program~(\ref{ddcaud1})-(\ref{ddcaud11}) solves the \textsc{Adv} problem.
 \end{lem}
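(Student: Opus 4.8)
The plan is to fix a production plan $\pmb{x}\in\Xset$ (equivalently, fix the cumulative productions $X_t=\sum_{i\in[t]}x_i$) and to show that the optimal value of the linear program~(\ref{ddcaud1})-(\ref{ddcaud11}) equals the optimal value of the \textsc{Adv} problem~(\ref{padv}) for this $\pmb{x}$. By Lemma~\ref{ladud} and the rewriting~(\ref{advfd}), the \textsc{Adv} value is $K+v$, where $K=\sum_{t\in[T]}\max\{f_I(X_t,\widehat{D}_t),f_B(X_t,\widehat{D}_t)\}$ is a constant and $v$ is the optimal value of the small LP~(\ref{caud1})-(\ref{caud3}). By LP strong duality, $v$ is also the optimal value of its dual~(\ref{dcaud1})-(\ref{dcaud4}), so it suffices to prove
\[
\min\Big\{\textstyle\sum_{t\in[T]}\pi_t+\Gamma^d\alpha+\sum_{t\in[T]}\gamma_t:\ (\ref{ddcaud2})\text{--}(\ref{ddcaud11})\Big\}=K+\min\Big\{\Gamma^d\alpha+\textstyle\sum_{t\in[T]}\gamma_t:\ (\ref{dcaud2})\text{--}(\ref{dcaud4})\Big\}.
\]

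For the inequality ``$\le$'' I would take an optimal solution $(\alpha^*,\pmb{\gamma}^*)$ of~(\ref{dcaud1})-(\ref{dcaud4}) and set $\pi_t:=\max\{f_I(X_t,\widehat{D}_t),f_B(X_t,\widehat{D}_t)\}$ for all $t\in[T]$. Then~(\ref{ddcaud2})-(\ref{ddcaud3}) hold by construction; and since $\alpha^*+\gamma_t^*\ge c_t=\max\{f_I(X_t,\widehat{D}_t-\Delta_t),f_B(X_t,\widehat{D}_t+\Delta_t)\}-\pi_t$ by feasibility and the definition of $c_t$, constraints~(\ref{ddcaud6})-(\ref{ddcaud7}) follow, and nonnegativity is immediate. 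The objective value of this solution is $K+\Gamma^d\alpha^*+\sum_t\gamma_t^*=K+v$, which bounds the left-hand side from above.

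For the reverse inequality ``$\ge$'' I would start from an arbitrary feasible $(\pmb{\pi},\alpha,\pmb{\gamma})$ of~(\ref{ddcaud2})-(\ref{ddcaud11}). Constraints~(\ref{ddcaud2})-(\ref{ddcaud3}) force $\pi_t\ge\pi_t^{0}:=\max\{f_I(X_t,\widehat{D}_t),f_B(X_t,\widehat{D}_t)\}$, so $\gamma_t':=\gamma_t+(\pi_t-\pi_t^{0})\ge 0$. Moving $\pi_t$ to the left side in~(\ref{ddcaud6}) and~(\ref{ddcaud7}) and subtracting $\pi_t^{0}$ yields $\alpha+\gamma_t'\ge\max\{f_I(X_t,\widehat{D}_t-\Delta_t),f_B(X_t,\widehat{D}_t+\Delta_t)\}-\pi_t^{0}=c_t$; hence $(\alpha,\pmb{\gamma}')$ is feasible for~(\ref{dcaud2})-(\ref{dcaud4}) and so $\Gamma^d\alpha+\sum_t\gamma_t'\ge v$. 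Since $\sum_t\pi_t+\sum_t\gamma_t=\sum_t\pi_t^{0}+\sum_t\gamma_t'=K+\sum_t\gamma_t'$, the objective of the chosen solution is $K+\Gamma^d\alpha+\sum_t\gamma_t'\ge K+v$, as required. Combining the two inequalities gives the claim, and thus~(\ref{ddcaud1})-(\ref{ddcaud11}) solves \textsc{Adv}. The only real subtlety, and the step I expect to be the main obstacle, is that the variables $\pi_t$ are coupled: they carry coefficient $+1$ in the objective yet also appear on the right-hand side of~(\ref{ddcaud6})-(\ref{ddcaud7}), so one must argue that raising $\pi_t$ above $\pi_t^{0}$ is never profitable — this is exactly what the reallocation $\gamma_t'=\gamma_t+(\pi_t-\pi_t^{0})$ achieves.
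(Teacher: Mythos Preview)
Your proof is correct and follows essentially the same idea as the paper's: the key observation in both is that any slack of $\pi_t$ above $\pi_t^{0}=\max\{f_I(X_t,\widehat D_t),f_B(X_t,\widehat D_t)\}$ can be shifted into $\gamma_t$ without changing the objective value, so one may assume $\pi_t=\pi_t^{0}$ at optimum and the program collapses to $K+(\text{dual of }(\ref{caud1})\text{--}(\ref{caud3}))$. The only stylistic difference is that the paper argues this via a local perturbation (``decrease $\pi_t^*$ by $\epsilon$; then $\gamma_t^*$ increases by at most $\epsilon$, so the objective does not increase''), whereas you phrase the same step as an explicit reallocation $\gamma_t'=\gamma_t+(\pi_t-\pi_t^{0})$ and derive the two inequalities directly; your version is arguably cleaner since it also makes the ``$\le$'' direction explicit, which the paper leaves implicit.
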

\begin{proof}
It is enough to show that there is an optimal solution $(\pmb{\pi}^*,\pmb{\gamma}^*,\alpha^{*})$ to (\ref{ddcaud1})-(\ref{ddcaud11}) such that 
$\pi^*_t=\max\{ f_I(X_t, \widehat{D}_t), f_B(X_t, \widehat{D}_t)\}$ for each $t\in[T]$.
Let us fix $t\in [T]$ and consider the term $\pi^*_t+\gamma^*_t$ in the objective function. In an optimal solution, we have 
\begin{equation}
\label{eee0}
\gamma^*_t=\max\{[0,f_I(X_t, \widehat{D}_t-\Delta_t)-\pi^*_t-\alpha^*]_{+}, [0, f_B(X_t, \widehat{D}_t+\Delta_t)-\pi^*_t-\alpha^*]_{+}\},
\end{equation}
where  $[y]_{+}=\max\{0,y\}$.
Suppose $\pi^*_t>f_I(X_t, \widehat{D}_t)$ and $\pi_t^*>f_B(X_t, \widehat{D}_t)$. We can then fix $\pi^*_t:=\pi^*_t-\epsilon$, for some $\epsilon>0$, so that $\pi^*_t=\max\{f_I(X_t, \widehat{D}_t), f_B(X_t, \widehat{D}_t)\}$.  Accordingly,  we modify $\gamma^*_t$ using~(\ref{eee0}), preserving the feasibility of (\ref{ddcaud1})-(\ref{ddcaud11}) . The new value of $\gamma_t^*$ is increased by at most $\epsilon$. In consequence, the objective value~(\ref{ddcaud1}) does not increase.
\end{proof}
Adding linear constraints $X_t=\sum_{i\in [t]} x_i$, $t\in [T]$, and  $\pmb{x}\in\Xset$ to  (\ref{ddcaud1})-(\ref{ddcaud11}) 
and using the fact that $f_I$ and $f_B$ are linear with respect to $X_t$
yield  a linear program for 
the \textsc{MinMax} problem. This leads to the following theorem.
\begin{thm}
The \textsc{MinMax} problem   under $\mathcal{U}^d$, for the non-overlapping case,
 can be solved in polynomial time.
\end{thm}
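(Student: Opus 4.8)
The plan is to turn the inner (adversarial) optimization into a \emph{minimization} linear program whose data depends affinely on the cumulative production vector, and then to merge this inner minimization with the outer minimization over production plans into a single linear program of polynomial size.

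First I would recall that $\textsc{MinMax}$ is exactly $\min_{\pmb{x}\in\Xset}\textsc{Adv}(\pmb{x})$ and that, by the preceding lemma, for every fixed $\pmb{x}\in\Xset$ the value $\textsc{Adv}(\pmb{x})$ equals the optimal value of the minimization program~(\ref{ddcaud1})--(\ref{ddcaud11}) in the variables $(\pmb{\pi},\pmb{\gamma},\alpha)$. The crucial structural point is that this program is a minimization, its objective $\sum_{t\in[T]}\pi_t+\Gamma^d\alpha+\sum_{t\in[T]}\gamma_t$ is linear, and each of its constraints is a linear inequality whose right-hand side is affine in the numbers $X_t$ (since $f_I(X_t,\cdot)$ and $f_B(X_t,\cdot)$ are affine in $X_t$). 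Hence, replacing the constants $X_t$ by the variables $X_t=\sum_{i\in[t]}x_i$ and appending the (linear, polynomial-size) description of $\Xset\subseteq\Rset^T_{+}$, the nested minimization over $\pmb{x}$ and over $(\pmb{\pi},\pmb{\gamma},\alpha)$ collapses into the single linear program
\[
OPT=\min\;\Big\{\;\sum_{t\in[T]}\pi_t+\Gamma^d\alpha+\sum_{t\in[T]}\gamma_t\;\Big\},
\]
the minimum being taken over $\pmb{x}\in\Xset$, $X_t=\sum_{i\in[t]}x_i$ $(t\in[T])$, and $(\pmb{\pi},\pmb{\gamma},\alpha)$ satisfying~(\ref{ddcaud2})--(\ref{ddcaud11}). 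Correctness is immediate: restricting this program to a fixed feasible $\pmb{x}$ reproduces~(\ref{ddcaud1})--(\ref{ddcaud11}), whose optimum is $\textsc{Adv}(\pmb{x})$, so the overall optimum is $\min_{\pmb{x}\in\Xset}\textsc{Adv}(\pmb{x})=OPT$.

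Finally I would observe that the assembled program has $O(T)$ variables and $O(T)$ constraints on top of the polynomial-size linear description of $\Xset$, so it is solvable in polynomial time by any polynomial LP algorithm (ellipsoid or interior-point method); an optimal production plan $\pmb{x}^*$ is then read off directly. There is no genuine obstacle here: the only point needing a line of care is that the two reformulations used to obtain~(\ref{ddcaud1})--(\ref{ddcaud11}) --- linear-programming duality applied to the $\pmb{\delta}$-subproblem and the epigraph substitution $\pi_t\ge f_I(X_t,\widehat{D}_t)$, $\pi_t\ge f_B(X_t,\widehat{D}_t)$ modeling $\max\{f_I(X_t,\widehat{D}_t),f_B(X_t,\widehat{D}_t)\}$ --- were carried out treating each $X_t$ as a symbolic affine expression, hence remain valid verbatim when $X_t$ becomes a decision variable; in particular the argument in the proof of the preceding lemma that some optimal solution has $\pi_t^{*}=\max\{f_I(X_t,\widehat{D}_t),f_B(X_t,\widehat{D}_t)\}$ holds for arbitrary fixed $X_t$.
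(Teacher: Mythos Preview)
Your proposal is correct and follows essentially the same approach as the paper: add the linear constraints $X_t=\sum_{i\in[t]}x_i$, $t\in[T]$, and $\pmb{x}\in\Xset$ to the minimization program~(\ref{ddcaud1})--(\ref{ddcaud11}), use linearity of $f_I$ and $f_B$ in $X_t$, and conclude that the resulting single linear program of polynomial size solves \textsc{MinMax}. Your write-up is a bit more explicit about why the inner and outer minimizations can be merged, but the argument is the same.
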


\subsection{General case}

In this section we drop the assumption that the cumulative demand intervals are non-overlapping. 
We first consider the \textsc{Adv} problem.
The following lemma is the key to constructing algorithms in this section, namely it
shows that it is enough to consider only $O(T)$ values of the cumulative demand in each interval.
\begin{lem}
\label{lemvert}
	There is an optimal solution $\pmb{D}^*\in \mathcal{U}^d$ to the \textsc{Adv} problem such that 
	\begin{equation}
D^*_k\in\mathcal{D}_k= [\widehat{D}_k-\Delta_k,\widehat{D}_k+\Delta_k]\cap \bigcup_{t\in[T]} 
\{\widehat{D}_t-\Delta_t,\widehat{D}_t,\widehat{D}_t+\Delta_t\},\; k\in [T].
\label{pves1}
\end{equation}
\end{lem}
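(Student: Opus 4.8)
The plan is to combine the convexity of the adversary's objective with a vertex–enumeration argument, in the same spirit as the proof of Lemma~\ref{ladud}, but now carried out on the chain polytope underlying $\mathcal{U}^d$. First I would observe that, for a fixed plan $\pmb{x}$ (hence fixed $X_1,\dots,X_T$), each term $\max\{f_I(X_t,D_t),f_B(X_t,D_t)\}$ is a convex, piecewise-linear ``V-shaped'' function of the single variable $D_t$ (it decreases along $f_I$, then increases along $f_B$, since $f_I$ is nonincreasing and $f_B$ nondecreasing in $D_t$), so the objective of \textsc{Adv} is convex on $\Rset^T$. The obstruction to a direct ``maximum at a vertex'' statement is the non-convex cardinality constraint $|\{t:D_t\neq\widehat{D}_t\}|\le\Gamma^d$. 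To get around it I would write $\mathcal{U}^d=\bigcup_{S\subseteq[T],\,|S|\le\Gamma^d}\mathcal{U}^d_S$, where $\mathcal{U}^d_S$ keeps the box constraints $\widehat{D}_t-\Delta_t\le D_t\le\widehat{D}_t+\Delta_t$ and the chain constraints $D_t\le D_{t+1}$, and in addition forces $D_t=\widehat{D}_t$ for all $t\notin S$. Each $\mathcal{U}^d_S$ is a bounded polytope containing the nominal scenario $\widehat{\pmb{D}}$, hence nonempty; since a convex function attains its maximum over a nonempty bounded polytope at a vertex and there are only finitely many admissible $S$, some optimal solution $\pmb{D}^*$ of \textsc{Adv} is a vertex of some $\mathcal{U}^d_S$.

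It then remains to show that every coordinate of such a vertex lies in the corresponding $\mathcal{D}_k$. Since $\pmb{D}^*$ is nondecreasing, partition $[T]$ into the maximal blocks of consecutive indices on which $\pmb{D}^*$ is constant, and let $v$ be the common value on a block $B$. I claim $v\in\bigcup_{t\in[T]}\{\widehat{D}_t-\Delta_t,\widehat{D}_t,\widehat{D}_t+\Delta_t\}$. Suppose not. Then for every $t\in B$ we have $v\neq\widehat{D}_t$, so $t\in S$ (coordinates outside $S$ are pinned to $\widehat{D}_t$), and $\widehat{D}_t-\Delta_t<v<\widehat{D}_t+\Delta_t$, so no box constraint is tight at $t$; moreover, by maximality of $B$, the blocks adjacent to $B$ carry values strictly below and strictly above $v$, so the chain constraints joining $B$ to its neighbours are slack. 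Consequently, for all sufficiently small $\epsilon>0$, the point obtained from $\pmb{D}^*$ by adding $\epsilon$ to every coordinate in $B$, and the one obtained by subtracting $\epsilon$, both stay in $\mathcal{U}^d_S$; this exhibits $\pmb{D}^*$ as the midpoint of two distinct feasible points, contradicting that $\pmb{D}^*$ is a vertex. Hence each block value is one of these $3T$ ``special'' values, and since $\pmb{D}^*\in\mathcal{U}^d_S$ also gives $D^*_k\in[\widehat{D}_k-\Delta_k,\widehat{D}_k+\Delta_k]$ for every $k$, we conclude $D^*_k\in\mathcal{D}_k$, as claimed.

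The only genuinely delicate step is the perturbation argument in the second paragraph: one must check carefully that a block whose value is strictly interior to every relevant box and strictly separated from the neighbouring blocks can be shifted in both directions without violating any box constraint, any chain constraint, or any equality constraint $D_t=\widehat{D}_t$ for $t\notin S$ — this last point being precisely why such a block must be contained entirely in $S$. Everything else (convexity of the objective, the decomposition of $\mathcal{U}^d$ over the support sets $S$, and nonemptiness/boundedness of each $\mathcal{U}^d_S$) is routine and parallels the reasoning already used for Lemma~\ref{ladud}.
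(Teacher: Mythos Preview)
Your proof is correct and follows essentially the same approach as the paper: decompose $\mathcal{U}^d$ as a finite union of chain polytopes indexed by the support set, invoke convexity of the objective to land on a vertex of one such polytope, and then rule out non-special block values via a two-sided perturbation of a maximal constant block. The only cosmetic difference is that the paper first narrows the interval bounds to make them monotone before running the perturbation argument, whereas you argue directly that a non-special block value forces every index in the block into $S$ with strictly interior box constraints; both routes reach the same contradiction.
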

\begin{proof}
Scenario set $\mathcal{U}^d$ is not convex. However, it can be decomposed into a union of convex sets in the following way.
Let $\mathcal{T}(\Gamma^d)=\{\mathcal{I}\subseteq [T]:\,|\mathcal{I}|=\Gamma^d\}$ and
define 
\begin{align}
\mathcal{U}^d(\mathcal{I})=\{\pmb{D}=(D_t)_{t\in[T]}\,:\,&
(\forall t\in [T-1])
(D_t\leq D_{t+1}),
(\forall t\in \mathcal{I})
(D_t\in [\widehat{D}_t-\Delta_t,\widehat{D}_t+\Delta_t]), \nonumber\\
&(\forall t\in [T]\setminus \mathcal{I})
(D_t=\widehat{D}_t)  \} \subset \mathcal{U}^d,\label{eee1}
\end{align}
where $\mathcal{I}\in\mathcal{T}(\Gamma^d)$.
Obviously, $\mathcal{U}^d(\mathcal{I})$ is a convex polytope.
Furthermore, it is easy to check that 
$\mathcal{U}^d=\bigcup_{\mathcal{I}\in\mathcal{T}(\Gamma^d)} \mathcal{U}^d(\mathcal{I})$.
The objective function in (\ref{padv})  is convex  with respect to~$\pmb{D}\in \Rset^T_{+}$ and 
 attains the maximum value at a vertex of  a convex  polytope
  (see, e.g.,~\cite{M75}).
  Hence, and from the fact  that $\mathcal{U}^d\not=\emptyset$,
   there exists $\mathcal{I}^*\in  \mathcal{T}(\Gamma^d)$ and, consequently, 
   the convex   polytope $\mathcal{U}^d(\mathcal{I}^*)$,  whose vertex is
an optimal solution to the \textsc{Adv} problem. 

To simplify notation, let us write $\underline{D}_t=\widehat{D}_t-\Delta_t$,
$\overline{D}_t=\widehat{D}_t+\Delta_t$ if  $t\in  \mathcal{I}^*$; and
$\underline{D}_t=\widehat{D}_t$, $\overline{D}_t=\widehat{D}_t$ 
if  $t\in  [T]\setminus\mathcal{I}^*$.
 The constraints $D_t\leq D_{t+1}$, $t\in [T-1]$,
 imply the polytope $\mathcal{U}^d(\mathcal{I}^*)$ does not change if we narrow the intervals so that
 \begin{equation}
 \underline{D}_t\leq \underline{D}_{t+1} \text{ and } \overline{D}_t\leq \overline{D}_{t+1}, \; t\in [T-1].
 \label{cboud}
 \end{equation}
 From now on, we assume that the modified bounds $\underline{D}_t$ and $\overline{D}_t$ fulfill~(\ref{cboud}). Observe that after the narrowing, we get $\underline{D}_k, \overline{D}_k\in \mathcal{D}_k$.

 Assume that $\pmb{D}=(D_t)_{t\in[T]}$
is  a vertex of~$\mathcal{U}^d(\mathcal{I}^*)$ and so $D_t\in [\underline{D}_t,\overline{D}_t]$, $t\in [T]$.
Suppose that $D_k\in (\underline{D}_k, \overline{D}_k)$ for some $k\in \mathcal{I}^*$. Let $(D_q,\dots, D_k,\dots, D_r)$, $q\leq k \leq r$, be subsequence of $(D_1,\dots,D_T)$, such that
$q=\min\{ t\in [T]: D_t=D_k\}$ and $r=\max\{ t\in [T]: D_t=D_k\}$. 
We now claim that $D_q=\overline{D}_{q}$ or $D_r=\underline{D}_{r}$, in consequence
$D_k=\overline{D}_{q}$ or $D_k=\underline{D}_{r}$, which completes the proof.

Suppose, contrary to our claim, that
the first and the last components of the subsequence are  such that
$D_q\in [\underline{D}_{q},\overline{D}_{q})$ and $D_r\in (\underline{D}_{r},\overline{D}_{r}]$.
Observe that $D_q\neq \underline{D}_q$, because otherwise $\underline{D}_q=D_q=D_k>\underline{D}_k$, which contradicts~(\ref{cboud}). Similarly, $D_r\neq \overline{D}_r$.  Therefore,
$D_q\in (\underline{D}_{q},\overline{D}_{q})$ and $D_r\in (\underline{D}_{r},\overline{D}_{r})$.
We thus have
$$D_1\leq D_2\leq D_{q-1}<D_q=D_{q+1}=\dots =D_k=\dots =D_{r-1}=D_r<D_{r+1}\leq\dots\leq D_T.$$
Let $\pmb{\epsilon}\in \Rset^T\setminus\{ \pmb{0}\}$ be such that $\epsilon_t=0$ for each $t\in [T]\setminus\{q,\dots,r\}$ and $\epsilon_t=\sigma>0$ for $t\in\{q,\dots, r\}$. Since $D_i\in (\underline{D}_i, \overline{D}_i)$ for each $i=q,\dots,r$, there is sufficiently small, but positive $\pmb{\epsilon}$, such that 
$\pmb{D}-\pmb{\epsilon} \in \mathcal{U}^d(\mathcal{I}^*)$ and  $\pmb{D}+\pmb{\epsilon}\in \mathcal{U}^d(\mathcal{I}^*)$. Hence $\pmb{D}$ is a strict convex combination of two solutions from $\mathcal{U}^d(\mathcal{I}^*)$ and $\pmb{D}$ is not a vertex solution.
\end{proof}

Equality~(\ref{pves1}) means that, in order to solve \textsc{Adv}, it is enough to examine only $O(T)$ values for 
every cumulative demand $D_k$, $k\in [T]$.
This fact allows us to
transform, in polynomial time,
the \textsc{Adv} problem to a version of the following \emph{restricted longest path problem} (\textsc{RLP} for short). 
We are given 
a layered directed acyclic graph~$G=(V,A)$.
Two nodes $\mathfrak{s}\in V$ and $\mathfrak{t}\in V$ are distinguished as the source node 
(no arc enters to~$\mathfrak{s}$) and the sink node (no arc leaves~$\mathfrak{t}$).
Two attributes $(c_{uw},\delta_{uw})$ are associated with each arc~$(u,w)\in A$, namely
$c_{uw}$ is the length (cost) and $\delta_{uw}$ is the weight of $(u,w)$
and  a bound~$W$  on the total weights of  paths.
In the \textsc{RLP} problem we seek a path~$p$ in~$G$ from  $\mathfrak{s}$ to  $\mathfrak{t}$
whose total weight is at most~$W$ ($\sum_{(u,w)\in p} \delta_{uw}\leq W$)
 and the length (cost) of~$p$  ($\sum_{(u,w)\in p} c_{uw}$) is maximal.

Given an instance of the \textsc{Adv} problem, the corresponding instance of~\textsc{RLP} is constructed as follows.
We first build a layered directed acyclic graph~$G=(V,A)$.
The node set~$V$ is
partitioned into $T+2$ disjoint  layers $V_0,V_1,\ldots,V_T,V_{T+1}$ in which 
$V_0=\{\mathfrak{s}\}$ and $V_{T+1}=\{\mathfrak{t}\}$ contain the source and the sink node, respectively.
Each node $u\in V_k$, $k\in [T]$, corresponds to exactly
one possible value of the $k$th component of a cumulative demand  vertex scenario (see (\ref{pves1})), 
denoted by~$D_{u}$,  
$D_{u}\in \mathcal{D}_k$,
$|V_k|=|\mathcal{D}_k|$.
We also partition~the arc set~$A$ into $T+1$ disjoint subsets, $A= A_1\cup\cdots\cup A_T \cup A_{T+1}$. 
Arc~$(u,w)\in A_1$ if $u\in V_0$ and $w\in V_1$; 
$(u,w)\in A_{T+1}$ if $u\in V_T$ and $w\in V_{T+1}$; and
arc~$(u,w)\in A_{k}$, $k=2,\ldots,T$, if
 $u\in V_{k-1}$, $w\in V_{k}$ and $D_{u}\leq D_{w}$.
 Set $W=\Gamma^d$.
Finally two attributes are associated with 
each arc $(u,w)\in A$: $c_{uw}$ and $\delta_{uw}\in \{0,1\}$, whose values are determined as follows:
\begin{equation}
(c_{uw},\delta_{uw})=
\begin{cases}
(\max\{f_I(X_k,D_w), f_B(X_k,D_w)\},1)&\text{if $(u,w)\in A_k$, $D_w\not=\widehat{D}_k$, $k\in [T]$,}\\
(\max\{f_I(X_k,D_w), f_B(X_k,D_w)\},0)&\text{if $(u,w)\in A_k$, $D_w=\widehat{D}_k$, $k\in [T]$,}\\
        (0,0)   &\text{if  $(u,w)\in A_{T+1}$}.
\end{cases}
\label{cuvduv}
\end{equation}
The second atribute $\delta_{uw}$ is equal to 1 if the value of~$D_w$
 differs from the nominal value $\widehat{D}_k$, and
 0 otherwise, $k\in [T]$. The transformation can be done in $O(T^3)$ time, since
 $A$ has $O(T^3)$ arcs. An example is shown in Figure~\ref{fig1}. At the nodes, other than $\mathfrak{s}$ and $\mathfrak{t}$, the possible values of the cumulative demands $D_u$ are shown. Observe that $\bigcup_{t\in[T]} 
\{\widehat{D}_t-\Delta_t,\widehat{D}_t,\widehat{D}_t+\Delta_t\}=\{1,2,3,5,6,7,9\}$. Hence $D_1\in
\mathcal{D}_1=\{1,3,5\}$, $D_2\in \mathcal{D}_2=\{3,5,6,7,9\}$ and $D_3\in
\mathcal{D}_3=\{5,6,7\}$. We can further assume that $D_2\neq 9$, due to the constraint $D_2\leq D_3$.
 \begin{figure}[ht]
 \centering
 \includegraphics[scale=0.35]{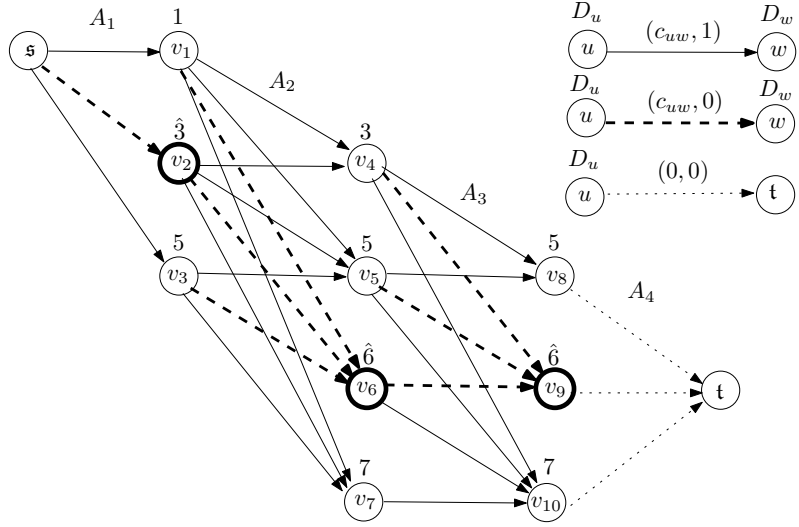}
 \caption{Graph for $T=3$ and $D_1\in [\hat{3}-2, \hat{3}+2]$, $D_2\in [\hat{6}-3,\hat{6}+3]$, $D_3\in [\hat{6}-1, \hat{6}+1]$. Notice that $D_2\leq 7$ for any feasible cumulative demand scenario. If $\Gamma^d=2$, then we seek a longest 
 $\mathfrak{s}$-$\mathfrak{t}$ path using at most~2 solid arcs.} \label{fig1}
 \end{figure}

 \begin{prop}
 A cumulative demand scenario with the cost of~$C^*$ is optimal for an instance of the \textsc{Adv} problem
 if and only if there is an optimal $\mathfrak{s}$-$\mathfrak{t}$ path with the length (cost) of~$C^*$
 for the constructed instance of
  the \textsc{RLP} problem.
  \label{padvrlp}
 \end{prop}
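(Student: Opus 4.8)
The plan is to exhibit an explicit, cost-preserving correspondence between the scenarios of $\mathcal{U}^d$ that matter (those singled out in Lemma~\ref{lemvert}) and the feasible $\mathfrak{s}$-$\mathfrak{t}$ paths of the constructed \textsc{RLP} instance, and then to transfer optimality across this correspondence. Concretely, I would first observe that, since $G$ is layered and its arcs join only consecutive layers while $V_0,V_{T+1}$ are singletons, every $\mathfrak{s}$-$\mathfrak{t}$ path visits exactly one node $u_k\in V_k$ in each layer $k\in[T]$, so a path $p$ is completely encoded by the values $D_k:=D_{u_k}\in\mathcal{D}_k$, $k\in[T]$. The defining condition of the arc sets $A_k$, $k=2,\dots,T$, forces $D_{u_{k-1}}\le D_{u_k}$, hence $D_1\le\dots\le D_T$; moreover $D_k\in\mathcal{D}_k\subseteq[\widehat D_k-\Delta_k,\widehat D_k+\Delta_k]$, and $\sum_{(u,w)\in p}\delta_{uw}=\sum_{k\in[T]}\delta_{u_{k-1}u_k}=|\{k:D_k\ne\widehat D_k\}|$, so the path-weight bound $\sum_{(u,w)\in p}\delta_{uw}\le W=\Gamma^d$ is precisely the budget constraint in~(\ref{dr}). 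Therefore $p$ is feasible for \textsc{RLP} if and only if $\pmb D=(D_k)_{k\in[T]}$ lies in $\mathcal{U}^d$ with every $D_k\in\mathcal{D}_k$. Using the attribute definition~(\ref{cuvduv}), together with $c_{uw}=0$ on the last layer $A_{T+1}$, the length of $p$ equals $\sum_{k\in[T]}\max\{f_I(X_k,D_k),f_B(X_k,D_k)\}$, i.e.\ exactly the \textsc{Adv} objective value at $\pmb D$. Conversely, any $\pmb D\in\mathcal{U}^d$ with $D_k\in\mathcal{D}_k$ produces, on selecting in each layer $V_k$ the node carrying value $D_k$, a path whose arcs all exist (by monotonicity of $\pmb D$) and whose weight and length are computed as above.

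Then I would close the equivalence in both directions. For the ``only if'' direction: if $\pmb D$ has cost $C^*$ and is optimal for \textsc{Adv}, then $C^*$ equals the optimal \textsc{Adv} value; by Lemma~\ref{lemvert} there is an \textsc{Adv}-optimal scenario $\pmb D^\sharp$ with $D^\sharp_k\in\mathcal{D}_k$, still of cost $C^*$, which maps to a feasible \textsc{RLP} path $p^\sharp$ of length $C^*$; since every feasible \textsc{RLP} path maps back to a feasible \textsc{Adv} scenario of equal objective value, no feasible path can exceed $C^*$ in length, so $p^\sharp$ is an optimal path. For the ``if'' direction: an optimal \textsc{RLP} path $p$ of length $C^*$ maps to a scenario $\pmb D\in\mathcal{U}^d$ of cost $C^*$; the image of $\pmb D^\sharp$ from Lemma~\ref{lemvert} is a feasible path whose length is the optimal \textsc{Adv} value, and optimality of $p$ forces $C^*$ to be at least that value, while feasibility of $\pmb D$ forces it to be at most that value; hence $C^*$ is the optimal \textsc{Adv} value and $\pmb D$ is \textsc{Adv}-optimal.

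The only delicate point is bookkeeping rather than a genuine obstacle: one must invoke Lemma~\ref{lemvert} to obtain an optimal \textsc{Adv} scenario \emph{already} in the restricted form $D_k\in\mathcal{D}_k$, since an arbitrary optimal scenario need not have this form, and one must note that the correspondence, while not a bijection on all of $\mathcal{U}^d$, is value-preserving on the two sets actually being compared (restricted budget-feasible monotone scenarios versus feasible \textsc{RLP} paths). It is also worth recording that $G$ contains at least one $\mathfrak{s}$-$\mathfrak{t}$ path (for instance the one through the nominal values, feasible because $\mathcal{U}^d\neq\emptyset$), so the \textsc{RLP} instance is well posed, and that dead-end nodes $u\in V_k$ lacking an admissible successor cause no difficulty since optimal paths simply avoid them.
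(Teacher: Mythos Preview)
Your proposal is correct and follows essentially the same approach as the paper: both establish a cost-preserving correspondence between feasible \textsc{RLP} paths and scenarios $\pmb{D}\in\mathcal{U}^d$ with $D_k\in\mathcal{D}_k$, invoke Lemma~\ref{lemvert} to pass from an arbitrary \textsc{Adv}-optimum to one of this restricted form, and then argue optimality in each direction by comparing values across the correspondence. Your write-up is a bit more explicit about the bookkeeping (distinguishing the arbitrary optimal $\pmb{D}$ from the restricted $\pmb{D}^\sharp$, and noting well-posedness of the \textsc{RLP} instance), but the underlying argument is the same.
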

 \begin{proof}
 Suppose that $\pmb{D}^*=(D^*_k)_{k\in[T]}\in \mathcal{U}^d$ with the cost of~$C^*$
 is an  optimal cumulative demand scenario  for an instance of~\textsc{Adv}
  for $\mathcal{U}^d$. By Lemma~\ref{lemvert},
 without loss of generality, we can assume that  $\pmb{D}^*$ is a  vertex of $\mathcal{U}^d(\mathcal{I})$ for
 some  $\mathcal{I}\in  \mathcal{T}(\Gamma^d)$.
 Thus $D^*_k\in \mathcal{D}_k$, $k\in [T]$, (see~(\ref{pves1})).
 From the construction of~$G$ and the definition of~$\mathcal{T}(\Gamma^d)$
 it follows that $\pmb{D}^*$ corresponds to
 an $\mathfrak{s}$-$\mathfrak{t}$ path in~$G$, say~$p^*$, which satisfies the budget constraint 
 $\sum_{(u,w)\in p^*} \delta_{uw}\leq \Gamma^d$ and with the length of~$C^*$.
 We claim that $p^*$ is an optimal path  for~\textsc{RLP} in~$G$.
 Suppose, contrary to our claim, that there exists a feasible $\mathfrak{s}$-$\mathfrak{t}$ path~$p'$ in~$G$
 of length (cost) greater than~$C^*$. By the construction of~$G$, the first $T$ arcs of~$p'$
 correspond to scenario~$\pmb{D}'=(D'_k)_{k\in[T]}$ 
 such that  $D'_k\in \mathcal{D}_k$, $k\in [T]$, and 
  $D'_t\leq D'_{t+1}$. Obviously, $\pmb{D}'$ has the same cost as~$p'$.
 Since $\sum_{(u,w)\in p'} \delta_{uw}\leq \Gamma^d$, $\pmb{D}'\in \mathcal{U}^d(\mathcal{I})$ for
 some  $\mathcal{I}\in  \mathcal{T}(\Gamma^d)$ and in consequence $\pmb{D}'\in\mathcal{U}^d$.
 This contradicts the optimality of $\pmb{D}^*$ over set $\mathcal{U}^d$.
 
 Assume that $p^*$ is an optimal  $\mathfrak{s}$-$\mathfrak{t}$ path with the length (cost) of~$C^*$
 in~$G$. Similarly,
 from the construction of~$G$ and the feasibility of~$p^*$ it follows that its first $T$~arcs
 correspond to scenario~$\pmb{D}^*$ from $\mathcal{U}^d(\mathcal{I})$ for
 some  $\mathcal{I}\in  \mathcal{T}(\Gamma^d)$ and  from $\mathcal{U}^d$. 
 The cost of  $\pmb{D}^*$ is equal to the cost of~$p^*$.
Furthermore, (\ref{pves1}) and again the construction of~$G$ show that
each vertex scenario of~$ \mathcal{U}^d(\mathcal{I})$ for every
$\mathcal{I}\in \mathcal{T}(\Gamma)$ corresponds to  a feasible $\mathfrak{s}$-$\mathfrak{t}$ path in~$G$
with the same cost, that is not better than~$C^*$ due to  the   optimality of~$p^*$.
Lemma~\ref{lemvert} now leads to  the optimality of~$\pmb{D}^*$
for an instance of~\textsc{Adv}.
  \end{proof}

 In general, the restricted longest (shortest) path problem is weakly NP-hard, even for series-parallel graphs
(see, e.g.,~\cite{GJ79}). However, it can be solved in pseudopolynomial time
$O(|A|W)$ in directed acyclic graphs, if 
the bound~$W\in \Zset_{+}$ and the weights~$\delta_{uw}\in  \Zset_{+}$, $(u,v)\in A$,
(see, e.g.,~\cite{H92}). Fortunately,  in our case 
$W=\Gamma^d\leq T$ and~$A$ has
 $O(T^3)$ arcs. We are thus led to the following theorem.
\begin{thm}
The \textsc{Adv} problem  for $\mathcal{U}^d$  
 can be solved in $O(T^4)$ time.
\end{thm}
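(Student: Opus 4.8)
The plan is to combine Proposition~\ref{padvrlp} with a standard layered dynamic program for the restricted longest path problem on a DAG. By Proposition~\ref{padvrlp}, solving \textsc{Adv} under $\mathcal{U}^d$ is equivalent to finding an optimal $\mathfrak{s}$-$\mathfrak{t}$ path in the layered acyclic digraph $G=(V,A)$ with arc attributes $(c_{uw},\delta_{uw})$ and budget $W=\Gamma^d$; recall that $G$ is built in $O(T^3)$ time and that $|A|=O(T^3)$, since there are $T+2$ layers, each of size $O(T)$ (as $|\mathcal{D}_k|\le 3T$), hence $O(T^2)$ arcs between consecutive layers. Since $\Gamma^d\in\{0\}\cup[T]$ is a nonnegative integer and each weight $\delta_{uw}\in\{0,1\}\subseteq\Zset_{+}$, the remaining task is to solve an instance of \textsc{RLP} on a DAG with integral weights and an integral bound $W\le T$, which is exactly the regime in which \textsc{RLP} is tractable.

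First I would set up the dynamic programming table. Process the layers $V_0,V_1,\dots,V_{T+1}$ in this topological order. For a node $u\in V$ and an integer $j\in\{0,1,\dots,\Gamma^d\}$, let $F(u,j)$ be the maximum length of an $\mathfrak{s}$-$u$ path whose total weight is at most $j$, with the conventions $F(u,j)=-\infty$ if no such path exists and $F(\cdot,j')=-\infty$ for $j'<0$. Initialize $F(\mathfrak{s},j)=0$ for all $j$, and use the recursion
\[
F(w,j)=\max_{(u,w)\in A}\ \bigl(F(u,\,j-\delta_{uw})+c_{uw}\bigr),
\]
which is well defined because $G$ is acyclic and layered. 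The optimal value of \textsc{RLP}, hence the optimal cost $C^*$ of \textsc{Adv}, equals $F(\mathfrak{t},\Gamma^d)$; the optimal scenario $\pmb{D}^*$ is recovered by backtracking a path attaining this value and reading off, for each layer $V_k$, the cumulative demand $D_u$ associated with the chosen node $u\in V_k$, as in the correspondence of Proposition~\ref{padvrlp}. Note that every $\mathfrak{s}$-$\mathfrak{t}$ path in $G$ already respects the monotonicity $D_u\le D_w$ between consecutive layers by construction of $A$, so only the budget constraint has to be tracked, and this is precisely what the index $j$ does; moreover $\delta_{uw}=1$ exactly on arcs entering a non-nominal value, so ``total weight at most $j$'' faithfully models the cardinality budget $|\{t:D_t\neq\widehat{D}_t\}|\le\Gamma^d$.

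For the running time: computing all entries $F(\cdot,\cdot)$ amounts to relaxing each arc $(u,w)\in A$ once for every budget level $j\in\{0,\dots,\Gamma^d\}$, i.e.\ $O(|A|\cdot\Gamma^d)=O(T^3\cdot T)=O(T^4)$ operations; the graph construction costs $O(T^3)$ and the path/scenario recovery costs $O(T)$. Hence the total time is $O(T^4)$, as claimed. I do not expect a genuine obstacle here — the argument is a direct pseudopolynomial dynamic program resting on the already-proved equivalence — so the only points that need care are verifying once more that $|A|=O(T^3)$ (so that the $O(|A|W)$ bound indeed yields $O(T^4)$) and checking that the ``weight at most $j$'' convention, together with the acyclic layered structure, gives a correct and cycle-free recursion.
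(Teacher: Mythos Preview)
Your proposal is correct and follows essentially the same approach as the paper: after invoking Proposition~\ref{padvrlp} to reduce \textsc{Adv} to the restricted longest path problem on the layered DAG~$G$, the paper simply cites the standard $O(|A|W)$ pseudopolynomial bound for \textsc{RLP} on DAGs with integral weights and plugs in $|A|=O(T^3)$ and $W=\Gamma^d\le T$. Your explicit layered dynamic program is precisely what realizes this $O(|A|W)$ bound, so the argument and the resulting $O(T^4)$ running time coincide.
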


We now deal with the problem of computing an optimal production plan. Our goal is to construct a linear programming formulation for the \textsc{MinMax} problem. 
Unfortunately,
a direct approach based on the network flow theory leads to
 a linear program with the associated  polytope that has not the integrality property (see~\cite{DR00}), i.e the one being the intersection of 
 the path polytope defined as
the convex hull of the characteristic vectors of 
$\mathfrak{s}$-$\mathfrak{t}$ paths in~$G$ and the half-space defined by
the budged constraint $\sum_{(u,w)\in p} \delta_{uw}\leq W$,  even if  $W=\Gamma^d$  is  bounded by an integer~$T$
and  the weights~$\delta_{uw}\in  \{0,1\}$, $(u,v)\in A$.
 In consequence, such a restricted longest path problem in~$G$ cannot be solved as 
the flow based linear program.
However,  using the fact that $W=\Gamma^d\leq T$, we will  transform in polynomial time our RLP problem to the longest path problem in acyclic graphs, for which a compact linear program can be built. The idea consists in transforming $G=(A,V)$ into $G'=(A',V')$ by splitting each node~$u$ of $G$, different from~$\mathfrak{t}$, into at most $\Gamma^d+1$ nodes labeled as $u^0,\dots,u^{\Gamma^d}$. Each arc $(u,w)\in A$ with
 the attributes $(c_{uw}, \delta_{uw})$ (see~(\ref{cuvduv})) induces the set of arcs $(u^j,w^{j+\delta_{uv}})$, $j=0,\dots,\Gamma^d$, $j+\delta_{uw}\leq \Gamma^d$ in $A'$ with the  same costs $c_{uw}$. The nodes from the $T$th layer are connected to $\mathfrak{t}$ by arcs with~0 cost.
 We remove from $G'$ all nodes which are not connected to $\mathfrak{s}^0$ or $\mathfrak{t}$, obtaining a reduced graph~$G$'. The resulting graph~$G'$, for the graph~$G$ presented in Figure~\ref{fig1}, is shown in Figure~\ref{fig2}.
 \begin{figure}[ht]
 \centering
 \includegraphics[scale=0.27]{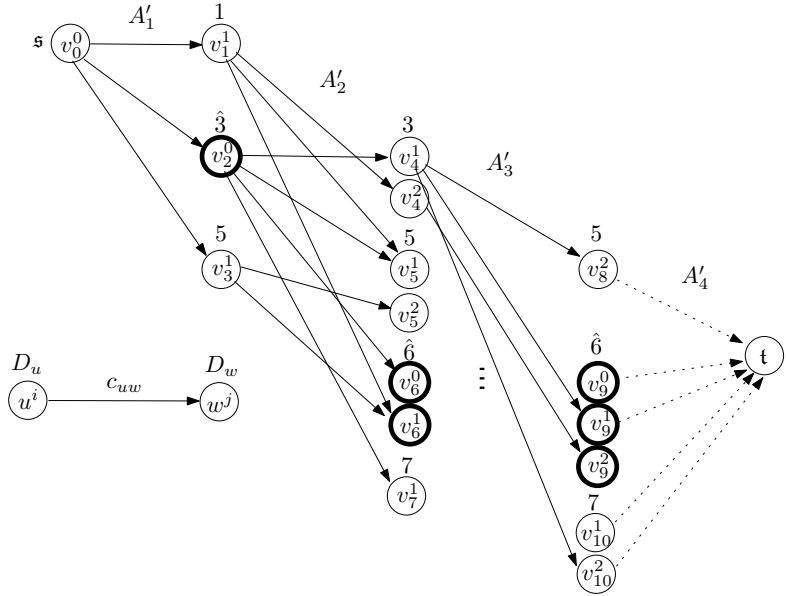}
 \caption{Graph $G'$ corresponding to $G$ presented in Figure~\ref{fig1}. 
 For readability
 not all arcs from the $A'_3$ are shown.} \label{fig2}
 \end{figure}
 
Graph~$G'$ has  $O(T^2\Gamma^d)$ nodes and  $O(T^3 \Gamma^{d})$ arcs. Since $\Gamma^d\leq T$, its size is polynomial in the input size of the \textsc{MinMax} problem. We now use the dual linear programming formulation of the longest path problem in $G'$. Let us associate unrestricted variable $\pi_u^j$ with each node $u^j$ of $G'$ other than $\mathfrak{t}$ and unrestricted variable $\pi_{\mathfrak{t}}$ with node $t$. The corresponding linear programming formulation for the longest path problem in $G'$ (and for $\textsc{Adv}$) is a follows:
$$
\begin{array}{lll}
	\min &\pi_{\mathfrak{t}} \\
	\text{s.t.} & \pi_{\mathfrak{s}}^0=0,\\
		 & \pi_{w}^j-\pi_{u}^i\geq c_{uw}, & (u^i, w^j)\in A', \\
	 & \pi_{\mathfrak{t}}-\pi_w^j \geq 0,& (w^j, \mathfrak{t})\in A'.
\end{array}
$$
Using the definition of $c_{uw}$ (see~(\ref{cuvduv})), and 
adding linear constraints $X_t=\sum_{i\in [t]} x_i$, $t\in [T]$, and  $\pmb{x}\in\Xset$,
we can rewrite this program as follows:
\begin{equation}
\begin{array}{lll}
	\min &\pi_{\mathfrak{t}} \\
	\text{s.t.} & \pi_{\mathfrak{s}}^0=0,\\
		 & \pi_{w}^j-\pi_{u}^i\geq f_I(X_k,D_w), & (u^i, w^j)\in A',\\
		 & \pi_{w}^j-\pi_{u}^i\geq f_B(X_k,D_w), & (u^i, w^j)\in A',\\
		  & \pi_{\mathfrak{t}}- \pi_w^j\geq 0, & (w^j, \mathfrak{t})\in A',\\
		  & X_t=\sum_{i\in [t]} x_i, & t\in [T],\\
		  & \pmb{x}\in \Xset.
\end{array}
\label{lpminmax}
\end{equation}
Formulation~(\ref{lpminmax}) is a linear programming formulation for the \textsc{MinMax} problem, with $O(T^3)$ variables and  $O(T^4)$ constraints.  We thus get the following result:
\begin{thm}
The \textsc{MinMax} problem   under $\mathcal{U}^d$  in the general case
 is polynomially solvable.
 \end{thm}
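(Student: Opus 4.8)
The plan is to recognize that all the heavy lifting has already been done: for a \emph{fixed} plan $\pmb{x}\in\Xset$ the inner \textsc{Adv} problem under $\mathcal{U}^d$ is, by Lemma~\ref{lemvert} and Proposition~\ref{padvrlp}, the restricted longest path problem on the layered DAG $G=(V,A)$ with arc data~(\ref{cuvduv}) and budget $W=\Gamma^d$. The goal is to produce a \emph{single} polynomial-size linear program whose optimum equals $OPT$, so the only real work is to turn the budget-constrained path problem into an ordinary (budget-free) longest path problem on a polynomial-size DAG whose longest-path LP is exact, and then to glue in the variables $X_t=\sum_{i\in[t]}x_i$ and the constraint $\pmb{x}\in\Xset$. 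The obstruction to a naive approach — dualizing a flow LP on $G$ itself — is that, as the text preceding~(\ref{lpminmax}) records, the path polytope intersected with the budget half-space need not be integral, so such an LP would only relax \textsc{Adv}, not solve it; hence the need to carry the budget \emph{inside} the graph.

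First I would build $G'=(V',A')$ from $G$ by splitting each node $u\neq\mathfrak{t}$ into copies $u^0,\dots,u^{\Gamma^d}$, where $u^j$ means ``$j$ units of budget already spent''; an arc $(u,w)$ of $G$ with weight $\delta_{uw}\in\{0,1\}$ and cost $c_{uw}$ yields the arcs $(u^j,w^{j+\delta_{uw}})$ for all admissible $j$ with the same cost $c_{uw}$, every layer-$T$ copy is joined to $\mathfrak{t}$ by a zero-cost arc, and nodes not reachable from $\mathfrak{s}^0$ or not co-reachable to $\mathfrak{t}$ are deleted. The key verification step is that $\mathfrak{s}^0$--$\mathfrak{t}$ paths in $G'$ are in cost-preserving bijection with the $\mathfrak{s}$--$\mathfrak{t}$ paths of $G$ that respect $\sum_{(u,w)\in p}\delta_{uw}\le\Gamma^d$; granting this, the \textsc{Adv} value for a fixed $\pmb{x}$ equals the \emph{unconstrained} longest-path length in $G'$. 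Because $\Gamma^d\le T$, $G'$ has $O(T^2\Gamma^d)=O(T^3)$ nodes and $O(T^3\Gamma^d)=O(T^4)$ arcs, polynomial in the input.

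Next I would write the standard longest-path LP on the DAG $G'$ — which is exact, its constraint matrix being an arc–node incidence matrix — introducing $\pi_u^j$ for each node and $\pi_{\mathfrak{t}}$ for the sink: minimize $\pi_{\mathfrak{t}}$ subject to $\pi_{\mathfrak{s}}^0=0$, $\pi_w^j-\pi_u^i\ge c_{uw}$ for every arc, and $\pi_{\mathfrak{t}}-\pi_w^j\ge0$ into $\mathfrak{t}$. Substituting~(\ref{cuvduv}) turns a layer-$k$ cost into the pair of inequalities $\pi_w^j-\pi_u^i\ge f_I(X_k,D_w)$ and $\pi_w^j-\pi_u^i\ge f_B(X_k,D_w)$, which are linear in $X_k$ since $f_I,f_B$ are; appending $X_t=\sum_{i\in[t]}x_i$ and $\pmb{x}\in\Xset$ gives exactly program~(\ref{lpminmax}). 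To conclude, fix any feasible $\pmb{x}$: the restriction of~(\ref{lpminmax}) to the $\pi$-variables is the (exact) longest-path LP on $G'$ with the corresponding costs, whose value is $\max_{\pmb{D}\in\mathcal{U}^d}\sum_{t}\max\{f_I(X_t,D_t),f_B(X_t,D_t)\}$ by the bijection and Proposition~\ref{padvrlp}; hence minimizing~(\ref{lpminmax}) jointly over $\pi$ and $\pmb{x}$ returns $\min_{\pmb{x}\in\Xset}\textsc{Adv}(\pmb{x})=OPT$. Since~(\ref{lpminmax}) has $O(T^3)$ variables and $O(T^4)$ constraints, it is solvable in polynomial time, which proves the theorem. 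The main obstacle is the middle step: checking that the node-splitting construction is \emph{exact} — no budget-respecting path is lost and no spurious path is created — and that the DAG longest-path LP is tight, so that the composed program equals the minmax value rather than a relaxation of it.
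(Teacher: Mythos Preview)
Your proposal is correct and follows essentially the same route as the paper: build the expanded DAG $G'$ by budget-indexed node splitting, write the exact longest-path LP on $G'$, substitute the arc costs via~(\ref{cuvduv}) to obtain~(\ref{lpminmax}), and append the planning constraints. The size bounds and the justification for why a direct flow LP on $G$ would fail also match the paper's argument.
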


\section{Continuous budgeted uncertainty}
\label{scbu}

In this section we discuss the \textsc{MinMax} and \textsc{Adv} problem under the 
continuous  budgeted cumulative demand  uncertainty~$\mathcal{U}^c$ defined in~(\ref{cr}).
Similarly as in Section~\ref{sdbu},
we first study the non-overlapping case and 
 we consider then  the general one.
We provide negative and positive complexity results for the  \textsc{Adv}  and \textsc{MinMax} problems.

\subsection{Non-overlapping case}
We start by analyzing the properties of the \textsc{Adv} problem.
\begin{lem} 
The \textsc{Adv} problem under $\mathcal{U}^c$, for the non-overlapping case,
boils down to the following problem:
\begin{align}
 \max &\sum_{t\in [T]}\max\{ f_I(X_t, \widehat{D}_t-\delta_t), f_B(X_t, \widehat{D}_t+\delta_t)\}
 \label{auc1}\\ 
\text{\rm s.t. }& \sum_{t\in [T]} \delta_t\leq \Gamma^c, & \label{auc2}\\
                &0\leq \delta_t\leq \Delta_t,\;\;t\in [T].\label{auc3}
\end{align}
\label{laduc}
\end{lem}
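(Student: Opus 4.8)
The plan is to establish that the optimal value of \textsc{Adv} under $\mathcal{U}^c$ in the non-overlapping case coincides with the optimal value of~(\ref{auc1})--(\ref{auc3}) by proving the two inequalities between them; the argument runs parallel to the proof of Lemma~\ref{ladud}. Throughout I use that, in the non-overlapping case, the monotonicity constraints $D_t\le D_{t+1}$ may be dropped from the description of $\mathcal{U}^c$, and that $f_I(X_t,\cdot)$ is nonincreasing while $f_B(X_t,\cdot)$ is nondecreasing.

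First I would take an arbitrary $\pmb{D}\in\mathcal{U}^c$ and set $\delta_t=|D_t-\widehat{D}_t|$ for $t\in[T]$. Then $0\le\delta_t\le\Delta_t$ and $\sum_{t\in[T]}\delta_t=\parallel\pmb{D}-\widehat{\pmb{D}}\parallel_1\le\Gamma^c$, so $\pmb{\delta}$ is feasible for~(\ref{auc2})--(\ref{auc3}). Since $D_t\in[\widehat{D}_t-\delta_t,\widehat{D}_t+\delta_t]$, monotonicity of $f_I$ and $f_B$ yields $f_I(X_t,D_t)\le f_I(X_t,\widehat{D}_t-\delta_t)$ and $f_B(X_t,D_t)\le f_B(X_t,\widehat{D}_t+\delta_t)$, hence $\max\{f_I(X_t,D_t),f_B(X_t,D_t)\}\le\max\{f_I(X_t,\widehat{D}_t-\delta_t),f_B(X_t,\widehat{D}_t+\delta_t)\}$. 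Summing over $t$ shows the \textsc{Adv} objective at $\pmb{D}$ is bounded above by the objective of~(\ref{auc1}) at $\pmb{\delta}$, so the optimal value of \textsc{Adv} is at most that of~(\ref{auc1})--(\ref{auc3}).

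For the reverse inequality I would take an optimal solution $\pmb{\delta}^*$ of~(\ref{auc1})--(\ref{auc3}) and build the scenario
\[
D^*_t=\begin{cases}\widehat{D}_t-\delta^*_t&\text{if }f_I(X_t,\widehat{D}_t-\delta^*_t)>f_B(X_t,\widehat{D}_t+\delta^*_t),\\[2pt]\widehat{D}_t+\delta^*_t&\text{otherwise},\end{cases}\qquad t\in[T].
\]
By construction $D^*_t\in[\widehat{D}_t-\Delta_t,\widehat{D}_t+\Delta_t]$ and $\parallel\pmb{D}^*-\widehat{\pmb{D}}\parallel_1=\sum_{t\in[T]}\delta^*_t\le\Gamma^c$; moreover---and this is where the non-overlapping hypothesis is used---$D^*_t\le\widehat{D}_t+\Delta_t\le\widehat{D}_{t+1}-\Delta_{t+1}\le D^*_{t+1}$ for every $t\in[T-1]$, so $\pmb{D}^*\in\mathcal{U}^c$. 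Since $\max\{f_I(X_t,D^*_t),f_B(X_t,D^*_t)\}=\max\{f_I(X_t,\widehat{D}_t-\delta^*_t),f_B(X_t,\widehat{D}_t+\delta^*_t)\}$ for each $t$, the \textsc{Adv} objective at $\pmb{D}^*$ equals the optimal value of~(\ref{auc1})--(\ref{auc3}), so the optimal value of \textsc{Adv} is at least that of~(\ref{auc1})--(\ref{auc3}). The two inequalities give the claimed reduction, and the displayed recipe recovers an optimal scenario for \textsc{Adv} from an optimal $\pmb{\delta}^*$.

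The only delicate point, and the expected main obstacle, is the feasibility of $\pmb{D}^*$: because the sign of each deviation is chosen independently per period, $\pmb{D}^*$ need not be nondecreasing in general, and it is exactly the separation of the intervals ($\widehat{D}_t+\Delta_t\le\widehat{D}_{t+1}-\Delta_{t+1}$) that rescues the monotonicity constraint; without it this step---and hence the reduction---fails, which is why the lemma is stated only for the non-overlapping case. Note also that, in contrast with Lemma~\ref{ladud}, there is no integrality or total-unimodularity claim here: the reformulation merely absorbs the choice of deviation direction into the inner $\max$ operator.
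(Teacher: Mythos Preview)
Your proof is correct and follows essentially the same approach as the paper's: both bound the \textsc{Adv} objective above by the objective of~(\ref{auc1})--(\ref{auc3}) via the monotonicity of $f_I$ and $f_B$, then construct the scenario $\pmb{D}^*$ from an optimal $\pmb{\delta}^*$ using exactly the recipe you display (which is the paper's~(\ref{optapvc})), invoking the non-overlapping hypothesis to secure $D^*_t\le D^*_{t+1}$. Your write-up is in fact slightly more explicit than the paper's about why the per-period inequality and the final equality $\max\{f_I(X_t,D^*_t),f_B(X_t,D^*_t)\}=\max\{f_I(X_t,\widehat{D}_t-\delta^*_t),f_B(X_t,\widehat{D}_t+\delta^*_t)\}$ hold.
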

\begin{proof}
The existence of an optimal vertex solution~$\pmb{\delta}^*$ to (\ref{auc1})-(\ref{auc3}) follows
from  the convexity of~the objective function~(\ref{auc1})  (see, e.g.,~\cite{M75}).
Let $\pmb{D}^{'}\in \mathcal{U}^c$ be an optimal solution of the \textsc{Adv} problem.
Since $\parallel \pmb{D}^{'}-\widehat{\pmb{D}}\parallel _1 \leq \Gamma^c$,
$D^{'}_t=\widehat{D}_t\pm \delta^{'}_t$, where $\delta_t^{'}\in [0,\Delta_t]$, $t\in[T]$, such that $\sum_{t\in [T]} \delta^{'}_t\leq \Gamma^c$.
Thus $\pmb{\delta}^{'}$ is a feasible solution to (\ref{auc2})-(\ref{auc3}).
The value of the objective function of~(\ref{padv}) for $\pmb{D}^{'}$ can be bounded from above by
the value of~(\ref{auc1}) for $\pmb{\delta}^{'}$, and  so by  the value of~(\ref{auc1}) for $\pmb{\delta}^{*}$.
On the other hand,
the cumulative demand scenario~$\pmb{D}^{*}\in \mathcal{U}^c$ corresponding to~$\pmb{\delta}^{*}$ can be built as follows:
\begin{equation}
D^{*}_t=
\begin{cases}
\widehat{D}_t-\delta^{*}_t &\text{if
$f_I(X_t,\widehat{D}_t-\delta^{*}_t)>f_B(X_t,\widehat{D}_t+\delta^{*}_t),$}\\
\widehat{D}_t+\delta^{*}_t&\text{otherwise},
\end{cases}
\label{optapvc}
\end{equation}
and by the optimality of~$\pmb{D}^{'}$,
the value of the objective function of~(\ref{padv}) for $\pmb{D}^{*}$, which equal to 
the value of~(\ref{auc1}) for $\pmb{\delta}^{*}$, is  bounded from above by 
the value  of~(\ref{padv}) for $\pmb{D}^{'}$. Therefore
$\pmb{D}^{*}$ is also  an optimal solution to~(\ref{padv}) and the lemma follows.
\end{proof}
Lemma~\ref{laduc} now shows that solving the \textsc{Adv} problem is equivalent to
solving (\ref{auc1})-(\ref{auc3}). An optimal solution to~\textsc{Adv} can be formed according to~(\ref{optapvc}).
The following problem is an equivalent reformulation of (\ref{auc1})-(\ref{auc3}):
\begin{align}
 \max \;&A+\sum_{t\in [T]} c_t(\delta_t) \label{aucr1}\\
\text{\rm s.t. }& \sum_{t\in [T]} \delta_t\leq \Gamma^c, & \label{aucr2}\\
                &0\leq \delta_t\leq \Delta_t,\;\;t\in [T],\label{aucr3}
\end{align}
where
\[
A=\sum_{t\in[T]} \max\{f_I(X_t, \widehat{D}_t), f_B(X_t,\widehat{D}_t)\}
\]
and
$$
	c_t(\delta)=\max\{f_I(X_t, \widehat{D}_t-\delta), f_B(X_t,\widehat{D}_t+\delta)\}-\max\{f_I(X_t, \widehat{D}_t), f_B(X_t,\widehat{D}_t)\}, t\in [T],
$$
is a linear or piecewise linear nonnegative convex function in $[0,\Delta_t]$, $t\in[T]$,
(see Figure~\ref{fcosts}). Observe that $A$ is constant.
Thus in order to solve the \textsc{Adv} problem we need to solve
 the inner optimization problem of (\ref{aucr1})-(\ref{aucr3}), i.e.
\begin{align}
 \max &\sum_{t\in [T]} c_t(\delta_t) \label{cauc1}\\ 
 \text{s.t. }& \sum_{t\in [T]} \delta_t\leq \Gamma^c, & \label{cauc2}\\
                &0\leq \delta_t\leq \Delta_t,\;\;t\in [T].\label{cauc3}
\end{align}
\begin{figure}
  \centering
  \includegraphics[scale=0.43]{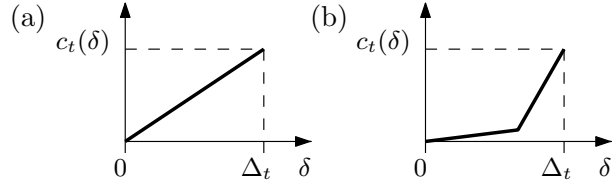}
  \caption{Functions~$c_t(\delta)$, $t\in [T]$, in the  problem (\ref{aucr1})-(\ref{aucr3}).}\label{fcosts}
\end{figure}
The above problem
is a special case of  a continuous knapsack problem with separable convex utilities that
 is  weakly NP-hard in general (see~\cite{LPR14}).
In (\ref{cauc1})-(\ref{cauc3})
the separable convex utilities $c_t(\delta)$ have the simpler forms,  piecewise linear ones, depicted in Figure~\ref{fcosts}.
The next lemma shows that  (\ref{cauc1})-(\ref{cauc3}) is also weakly NP-hard 
even for the restricted separable convex utilities $c_t(\delta)$ shown in Figure~\ref{fcosts}b
and so \textsc{Adv} is weakly NP-hard (the proof is similar in spirit to that in~\cite{LPR14}).
\begin{thm}
	The \textsc{Adv} problem under $\mathcal{U}^c$, for the non-overlapping case, is weakly NP-hard.
	\label{thmcompladv}
\end{thm}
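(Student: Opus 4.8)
The plan is to establish NP‑hardness by a polynomial reduction from the weakly NP‑hard \textsc{Partition} problem: given positive integers $a_1,\dots,a_n$ with $\sum_{i\in[n]}a_i=2B$, decide whether some $S\subseteq[n]$ satisfies $\sum_{i\in S}a_i=B$. By Lemma~\ref{laduc}, solving \textsc{Adv} under $\mathcal{U}^c$ in the non‑overlapping case is the same as solving~(\ref{auc1})--(\ref{auc3}), which after discarding the additive constant $A$ becomes the separable‑convex knapsack problem~(\ref{cauc1})--(\ref{cauc3}). Hence it suffices to produce an \textsc{Adv} instance whose reduced problem~(\ref{cauc1})--(\ref{cauc3}) is a disguised partition instance, while each $c_t$ keeps the restricted ``hockey‑stick'' shape of Figure~\ref{fcosts}b.

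First I would realize the cost functions. Take $T=n$, costs $c^P=c^I=b^P=0$ and $c^B=1$, so that $\max\{f_I(X_t,D_t),f_B(X_t,D_t)\}=\max\{0,D_t-X_t\}$ for every $t\in[T]$ (for $t=T$ the extra $c^P,b^P$ terms vanish). Then set $\Delta_t=2a_t$, $\widehat D_t=Mt$ with $M=8B+1$, and the production plan $X_t=\widehat D_t+a_t$, i.e. $x_1=X_1$ and $x_t=M+a_t-a_{t-1}>0$ for $t\ge2$, so $\pmb{x}\in\Xset=\Rset^T_{+}$. A direct check (using $M>8B$) gives $\widehat D_t\le\widehat D_{t+1}$, $0\le\Delta_t\le\widehat D_t$ and $\widehat D_t+\Delta_t\le\widehat D_{t+1}-\Delta_{t+1}$, so the instance is admissible and non‑overlapping. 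With these data $A=\sum_t\max\{0,\widehat D_t-X_t\}=0$ and $c_t(\delta)=\max\{0,\widehat D_t+\delta-X_t\}=\max\{0,\delta-a_t\}$ on $[0,\Delta_t]=[0,2a_t]$, which is exactly a function of the Figure~\ref{fcosts}b type. Taking $\Gamma^c=2B$ (legitimate, as $\Gamma^c$ may be any number in $\Rset_{+}$), Lemma~\ref{laduc} then shows that the optimal value of \textsc{Adv} equals
\[
\max\Bigl\{\sum_{t\in[T]}\max\{0,\delta_t-a_t\}\ :\ \sum_{t\in[T]}\delta_t\le 2B,\ 0\le\delta_t\le 2a_t,\ t\in[T]\Bigr\}.
\]

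Next I would prove this optimum equals $B$ exactly when the partition instance is a yes‑instance. For the upper bound, fix any feasible $\pmb{\delta}$, put $S=\{t:\delta_t>a_t\}$ and $\sigma=\sum_{t\in S}a_t$; the objective equals $\sum_{t\in S}(\delta_t-a_t)$, which is at most $\sum_{t\in S}(\Delta_t-a_t)=\sigma$ and also at most $\sum_{t\in S}\delta_t-\sigma\le\Gamma^c-\sigma=2B-\sigma$, hence at most $\min\{\sigma,2B-\sigma\}\le B$. If no subset sums to $B$, then each subset sum $\sigma$ is an integer $\neq B$, so $\min\{\sigma,2B-\sigma\}\le B-1$ and the optimum is strictly below $B$. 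Conversely, from a partition set $S^\star$ with $\sum_{t\in S^\star}a_t=B$, the vector $\delta_t=2a_t$ for $t\in S^\star$ and $\delta_t=0$ otherwise is feasible ($\sum_t\delta_t=2B$) and attains $\sum_{t\in S^\star}(2a_t-a_t)=B$. Thus the partition instance has a solution iff the \textsc{Adv} instance admits a scenario of cost at least $B$. Since the construction is polynomial, all numbers are polynomially bounded in the partition input, and a scenario $\pmb{D}$ is a polynomial‑size certificate, this establishes that \textsc{Adv} is weakly NP‑hard.

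The construction is short; the delicate point, and the one I would spend most care on, is the upper‑bound argument — specifically that $\sum_{t\in S}(\delta_t-a_t)\le\min\{\sigma,2B-\sigma\}$ holds for \emph{every} feasible $\pmb{\delta}$, not only for the extreme ones with $\delta_t\in\{0,\Delta_t\}$. This is what keeps the \emph{continuous} budgeted problem from collapsing to something polynomial: the fixed ``entry overhead'' $a_t$ that must be spent before $c_t$ turns positive, combined with the tight global budget $\Gamma^c=2B=\sum_t a_t$, is precisely the gadget forcing an exact two‑way split of $\{a_1,\dots,a_n\}$.
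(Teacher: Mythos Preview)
Your proof is correct and follows essentially the same route as the paper's: both reduce from a subset-sum-type problem (the paper uses \textsc{Subset Sum} with target~$b$, you use \textsc{Partition}) by constructing an \textsc{Adv} instance whose reduced form~(\ref{cauc1})--(\ref{cauc3}) has hockey-stick utilities $c_t(\delta)=\max\{0,\delta-\text{threshold}\}$, and both argue that the knapsack optimum hits its natural upper bound iff the required subset exists. The only notable difference is that your upper-bound argument via $\min\{\sigma,2B-\sigma\}$ is slightly more elaborate than necessary---your own $c_t$ satisfies $c_t(\delta)\le\delta/2$ on $[0,2a_t]$ with equality only at the endpoints, which would give the bound $\sum_t c_t(\delta_t)\le B$ directly, mirroring the paper's one-line observation $c_t(\delta)<\delta$ on $(0,a_t)$.
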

\begin{proof}
Consider the following  weakly NP-hard \textsc{Subset Sum} problem (see, e.g.,~\cite{GJ79}),
in which we are given a collection $\{a_1,\ldots,a_n\}$ of $n$~positive integers and an integer $b$. 
We ask if there is a subset $S\subseteq [n]$ such that $\sum_{i\in S} a_i=b$. 

We first show a polynomial time reduction from \textsc{Subset Sum} to  (\ref{cauc1})-(\ref{cauc3}).
We are given an instance of the \textsc{Subset-Sum} and a  corresponding instance of is built by setting the following 
parameters:
the number of periods $T=n$;  the costs $c^P=0$, $c^I=0$, $c^B=2$; the selling price $b^P=0$;
the nominal  value of cumulative demand $\widehat{D}_t=tA$ for every $t\in[T]$, where $A=\sum_{i\in [n]} a_i$;
the upper bound 
 $\Delta_t=a_t$ for every $t\in [T]$;
 the cumulative production  $X_t=tA+\frac{1}{2}a_t$ for every $t\in[T]$;
 the budget $\Gamma^c=b$. Therefore
 $c_t(\delta)=\max\{0, 2\delta-a_t\}$, $\delta\in [0,a_t]$,
is  a piecewise linear convex utility function, $t\in[T]$. Now the problem (\ref{cauc1})-(\ref{cauc3}) has the following form:
\begin{align}
 \max z=&\sum_{t\in [T]} \max\{0, 2\delta_t-a_t\}  \label{caucnp1}\\ 
 \text{s.t. }& \sum_{t\in [T]} \delta_t\leq b, & \label{caucnp2}\\
                &0\leq \delta_t\leq a_t,\;\;t\in [T].\label{caucnp3}
\end{align}
Note that  $c_t(\delta)=\max\{0, 2\delta-a_t\}=a_t$ if and only if  $\delta=a_t$ and
$c_t(\delta)=\max\{0, 2\delta-a_t\}<\delta$ for $\delta\in (0,a_t)$. Thus, by the constraint~(\ref{caucnp2}), the optimal value of (\ref{caucnp1})
is bounded from above by~$b$.  
Accordingly,  it is easily seen that $z=b$ if and only if  the instance of \textsc{Subset Sum} is positive.
Indeed,
$z=b$ if and only if for each $\delta_t>0$, the equality $\delta_t=a_t$ holds, which means that  the answer to \textsc{Subset Sum} is yes.  This
completes the proof is  (\ref{cauc1})-(\ref{cauc3}) is weakly NP-hard.
Furthermore, a trivial verification shows that (\ref{caucnp1})-(\ref{caucnp3}) is an instance of (\ref{auc1})-(\ref{auc3}.
Hence, the  \textsc{Adv} problem under $\Gamma^c$ is weakly NP-hard as well.
\end{proof}

Before proposing  a pseudopolynomial algorithm for the \textsc{Adv} problem, we show 
 the integrality property of an optimal solution  to  (\ref{cauc1})-(\ref{cauc3})  or equivalently  to~(\ref{auc1})-(\ref{auc3}).
 The following lemma is a key one.
\begin{lem}
Suppose that $\Gamma^c,\Delta_t\in \Zset_{+}$, $t\in [T]$. Then
there exists an optimal solution~$\pmb{\delta}^*$ to~(\ref{cauc1})-(\ref{cauc3})  such that
$\delta^{*}_t\in \{0,1,\ldots,\Delta_t\}$, $t\in [T]$.
\label{ipcb}
\end{lem}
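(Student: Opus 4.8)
The plan is to combine two standard facts: the objective of~(\ref{cauc1})--(\ref{cauc3}) is convex, while the feasible set is an integral polytope whenever $\Gamma^c$ and the $\Delta_t$ are integers. Write $P=\{\pmb{\delta}\in\Rset^T:\ \sum_{t\in[T]}\delta_t\leq\Gamma^c,\ 0\leq\delta_t\leq\Delta_t,\ t\in[T]\}$, a nonempty bounded polytope. As already noted, each $c_t$ is linear or piecewise linear convex on $[0,\Delta_t]$, so $\pmb{\delta}\mapsto\sum_{t\in[T]}c_t(\delta_t)$ is convex on $P$ and hence attains its maximum at a vertex of $P$ (see, e.g.,~\cite{M75}), exactly as in the proofs of Lemmas~\ref{ladud} and~\ref{laduc}. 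It therefore suffices to show that every vertex of $P$ has all coordinates in $\{0,1,\dots,\Delta_t\}$; the desired $\pmb{\delta}^*$ is then any optimal vertex.

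For the vertex integrality I would argue directly from the active-constraint description. Let $\pmb{\delta}^*$ be a vertex of $P$ and pick $T$ linearly independent constraints active at $\pmb{\delta}^*$, chosen from $\{\delta_t\geq 0\}_{t\in[T]}$, $\{\delta_t\leq\Delta_t\}_{t\in[T]}$ and $\sum_{t\in[T]}\delta_t\leq\Gamma^c$. The two box constraints of a given coordinate have parallel normals, so the chosen family contains at most one box constraint per coordinate, hence at most $T$ box constraints in total. If all $T$ chosen constraints are box constraints, then $\delta^*_t\in\{0,\Delta_t\}$ for every $t\in[T]$. Otherwise the budget constraint is one of the $T$, and the remaining $T-1$ are box constraints covering $T-1$ distinct coordinates, say all $t\neq j$; then $\delta^*_t\in\{0,\Delta_t\}$ for $t\neq j$, while the budget constraint, being active, gives $\delta^*_j=\Gamma^c-\sum_{t\neq j}\delta^*_t$. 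In both cases $\delta^*_t\in\{0,\Delta_t\}\subseteq\{0,1,\dots,\Delta_t\}$ for $t\neq j$, and $\delta^*_j$ is an integer because $\Gamma^c,\Delta_t\in\Zset_{+}$, with $0\leq\delta^*_j\leq\Delta_j$ by feasibility of $\pmb{\delta}^*$; hence $\delta^*_j\in\{0,1,\dots,\Delta_j\}$ as well. An equivalent route, in the spirit of the proof of Lemma~\ref{ladud}, is to observe that the constraint matrix of~(\ref{cauc2})--(\ref{cauc3}) is totally unimodular and its right-hand side integral under the hypothesis, so $P$ is integral and all its vertices are integral.

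I do not expect a real obstacle in this lemma. The only point that needs care is the claim that the maximum of the merely convex (piecewise linear, not linear) objective over $P$ is attained at a vertex — but this is precisely the fact already invoked twice earlier in the paper, so it may simply be cited. Everything else is a short combinatorial (or totally unimodular) argument, and the integrality it yields is exactly what the pseudopolynomial algorithm for \textsc{Adv} announced next will rely on.
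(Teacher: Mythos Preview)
Your proof is correct and follows the same two-step strategy as the paper: the convex objective attains its maximum at a vertex of $P$, and $P$ is an integral polytope when $\Gamma^c,\Delta_t\in\Zset_{+}$. The difference lies only in how the integrality of $P$ is established. The paper substitutes $\delta_t=\sum_{i\in[\Delta_t]}\delta_t^i$ with $\delta_t^i\in[0,1]$, obtaining a lifted polytope whose constraint matrix (one all-ones budget row plus box constraints $0\le\delta_t^i\le 1$) is totally unimodular with integer right-hand side; an integral optimal vertex there projects to an integral $\pmb{\delta}$. Your direct active-constraint count avoids this substitution and is arguably cleaner---indeed, as you note at the end, the \emph{original} constraint matrix of (\ref{cauc2})--(\ref{cauc3}) is already totally unimodular, so the lifting is not needed. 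Both routes are short; yours has the minor advantage of keeping the dimension fixed and of making the vertex structure explicit (all but at most one coordinate at a bound), which can be useful in its own right.
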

\begin{proof}
Substituting $\sum_{i\in [\Delta_t]} \delta^{i}_t$ into $\delta_t$, where $\delta^{i}_t\in[0,1]$, $t\in [T]$,
we can rewrite  (\ref{cauc1})-(\ref{cauc3})  as:
\begin{align*}
 \max &\sum_{t\in [T]} c_t\left(\sum_{i\in [\Delta_t]} \delta^{i}_t \right) \\ 
 \text{s.t. }& \sum_{t\in [T]} \sum_{i\in [\Delta_t]} \delta^{i}_t\leq \Gamma^c, & \\\
                &0\leq \delta^{i}_t\leq 1,\;\;t\in [T],i\in [\Delta_t].
\end{align*}
The constraint matrix of the resulting problem is totally unimodular.
Hence, be the assumption $\Gamma^c,\Delta_t\in \Zset_{+}$, $t\in [T]$, each vertex solution is integral, i.e. $\delta^{i}_t\in \{0,1\}$, $t\in [T],i\in [\Delta_t]$ (see, e.g.,~\cite{KV12}).
Moreover, since the objective function is convex, 
it
attains its maximum value at a vertex solution (see, e.g.,~\cite{M75}) that is integral.
Thus an original optimal solution~$\pmb{\delta}^*$ is integral as well and the lemma follows.
\end{proof}
\begin{figure}
 \centering
 \includegraphics[scale=0.196]{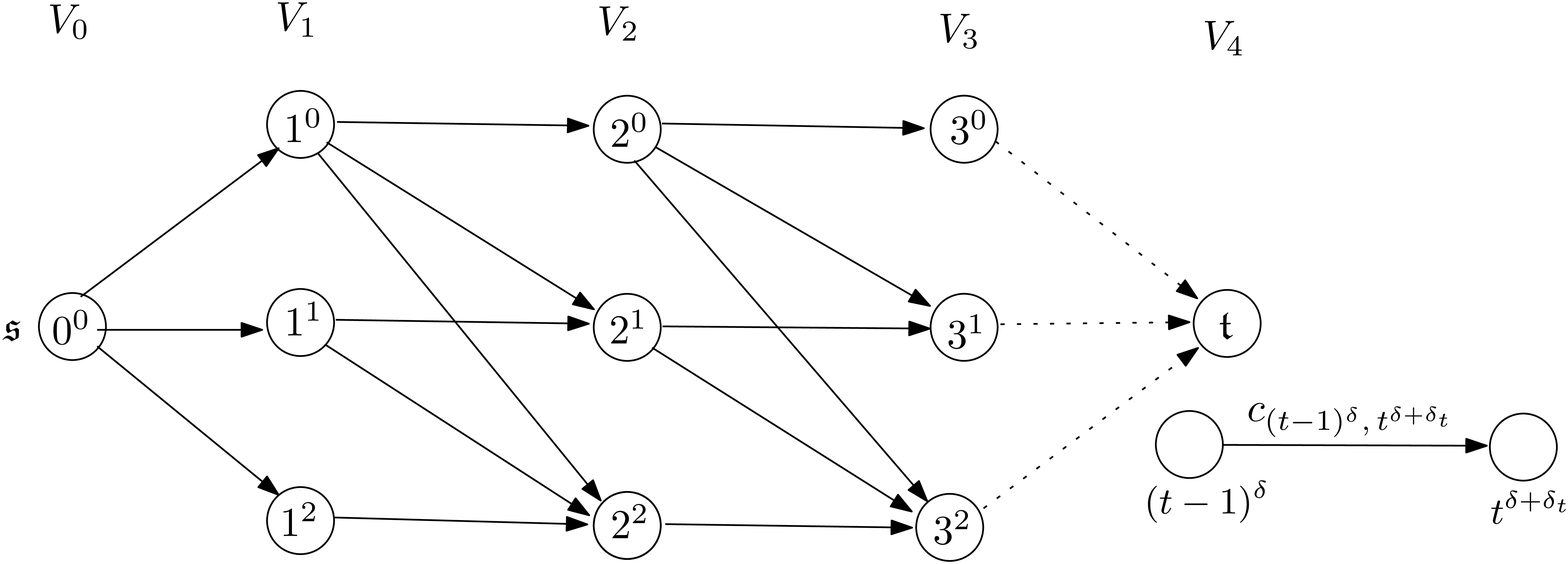}
 \caption{Graph $G$ for $T=3$, $\Gamma^c=2$ and $\Delta_t=2$, $t\in [T]$.} \label{fig3}
 \end{figure}

Now we ready to give a pseudopolynomial transformation of  problem~(\ref{auc1})-(\ref{auc3}) 
to a longest path problem  in a layered directed acyclic graph~$G=(V,A)$. 
We are given an instance 
of  (\ref{auc1})-(\ref{auc3}), where $\Gamma^c,\Delta_t\in \Zset_{+}$, $t\in [T]$.
 Graph~$G=(V,A)$ is build as follows:
the set~$V$ is
partitioned into $T+2$ disjoint  layers $V_0,V_1,\ldots,V_T,V_{T+1}$, in which 
each layer~$V_t$ corresponding to period~$t$, $t\in[T]$, has $\Gamma^c+1$ nodes denoted by
$t^0,\ldots,t^{\Gamma^c}$; sets
$V_0=\{\mathfrak{s}=0^0\}$ and $V_{T+1}=\{\mathfrak{t}\}$ contain two distinguished nodes, $\mathfrak{s}$ and 
$\mathfrak{t}$. 
The notation $t^{\delta}$, $\delta=0,\ldots, \Gamma^c$, means that~$\delta$ units of the
available uncertainty~$\Gamma^c$ have been allocated by an adversary to the cumulative demands
in periods from~$1$ to~$t$.
Each node $(t-1)^{\delta}\in V_{t-1}$, 
$t\in [T]$ (including the source node~$\mathfrak{s}=0^0$ in $V_0$) has at most~$\Delta_{t}+1$ arcs
that go to nodes in layer~$V_{t}$, namely arc $((t-1)^{\delta},t^{\delta+\delta_{t}})$ exists
and it is included to the set of arcs~$A$ if 
$\delta+\delta_{t}\leq \Gamma^c$,
where $\delta_{t}=0,\ldots,\Delta_{t}$. Moreover,
we associate with such arc  $((t-1)^{\delta},t^{\delta+\delta_{t}})\in A$  
the cost $c_{(t-1)^{\delta},\,t^{\delta+\delta_{t}}}$ in the following way (see also~(\ref{auc1})):
\begin{equation}
c_{(t-1)^{\delta},\,t^{\delta+\delta_{t}}}=\max\{ f_I(X_t, \widehat{D}_t-\delta_t), f_B(X_t, \widehat{D}_t+\delta_t)\}.
\label{cijdag}
\end{equation}
Notice that the costs are constant, because $\pmb{x}\in \Xset$ is fixed.
We finish with connecting  each node from~$V_T$ with
 the sink node~$\mathfrak{t}$ by the arc of zero cost.
 The transformation can be done in $O(T\Gamma^c\Delta_{\max})$ time, where $\Delta_{\max}=\max_{t\in [T]}\Delta_t$. An example for $T=3$, $\Gamma^c=2$ and $\Delta_t=2$, $t\in [T]$, is shown in Figure~\ref{fig3}.
 \begin{prop}
 A solution
 with the cost of~$C^*$ is optimal for an instance of problem~(\ref{auc1})-(\ref{auc3}) 
 if and only if there is a longest $\mathfrak{s}$-$\mathfrak{t}$ path with the length  of~$C^*$ in~$G$ constructed.
  \label{padvlp}
 \end{prop}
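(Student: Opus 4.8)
The plan is to mirror the argument of Proposition~\ref{padvrlp}, establishing a value-preserving correspondence between integral feasible solutions of~(\ref{auc1})-(\ref{auc3}) and $\mathfrak{s}$-$\mathfrak{t}$ paths in the constructed graph~$G$. First I would invoke Lemma~\ref{ipcb} to restrict attention to optimal solutions $\pmb{\delta}^*$ with $\delta^*_t\in\{0,1,\dots,\Delta_t\}$ for all $t\in[T]$; this is exactly the range of the arc labels $\delta_t$ used when building~$G$, so the graph is "rich enough" to encode an optimal solution.

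For the "only if" direction, given such an integral $\pmb{\delta}^*$ with objective value $C^*$, I would build the path that leaves $\mathfrak{s}=0^0$ and, for $t=1,\dots,T$, traverses the arc $((t-1)^{\sigma_{t-1}},t^{\sigma_t})$ where $\sigma_0=0$ and $\sigma_t=\sum_{i\in[t]}\delta^*_i$, finishing with the zero-cost arc from $T^{\sigma_T}$ to $\mathfrak{t}$. Feasibility of this path in~$G$ follows from two observations: $\sigma_t-\sigma_{t-1}=\delta^*_t\le\Delta_t$, so the arc is among those created at layer~$t$; and since the $\delta^*_i$ are nonnegative, $\sigma_t\le\sigma_T\le\Gamma^c$, so every node $t^{\sigma_t}$ actually exists in $V_t$. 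By~(\ref{cijdag}), the length of this path is $\sum_{t\in[T]}\max\{f_I(X_t,\widehat{D}_t-\delta^*_t),f_B(X_t,\widehat{D}_t+\delta^*_t)\}=C^*$.

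For the "if" direction, an arbitrary $\mathfrak{s}$-$\mathfrak{t}$ path in the layered DAG~$G$ has exactly $T+1$ arcs and visits one node $t^{\sigma_t}$ in each layer $V_t$, $t\in[T]$, with $\sigma_0=0$ and $0\le\sigma_t-\sigma_{t-1}\le\Delta_t$ by the definition of the arc set. Setting $\delta_t:=\sigma_t-\sigma_{t-1}$ yields $0\le\delta_t\le\Delta_t$ and $\sum_{t\in[T]}\delta_t=\sigma_T\le\Gamma^c$, so $\pmb{\delta}$ is feasible for~(\ref{auc1})-(\ref{auc3}), and by~(\ref{cijdag}) its objective value equals the path length. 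Combining the two directions, the optimal value of~(\ref{auc1})-(\ref{auc3})—attained at an integral point by Lemma~\ref{ipcb}—equals the length of a longest $\mathfrak{s}$-$\mathfrak{t}$ path in~$G$, and a feasible solution is optimal precisely when its associated path is longest. One may additionally recover an optimal scenario $\pmb{D}^*$ for \textsc{Adv} from $\pmb{\delta}^*$ via~(\ref{optapvc}), as in Lemma~\ref{laduc}.

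The only real subtlety—and it is minor—is the bookkeeping showing that the partial sums $\sigma_t$ never exceed $\Gamma^c$, so that all traversed nodes genuinely belong to~$G$; this uses nothing beyond nonnegativity of the $\delta_t$. The rest of the proof is a routine translation between allocations and paths, deliberately parallel to the proof of Proposition~\ref{padvrlp}.
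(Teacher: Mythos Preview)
Your proposal is correct and follows essentially the same approach as the paper: both establish a value-preserving correspondence between integral feasible solutions of~(\ref{auc1})--(\ref{auc3}) (whose existence at optimum is guaranteed by Lemma~\ref{ipcb}) and $\mathfrak{s}$-$\mathfrak{t}$ paths in~$G$, by tracking the partial sums $\sigma_t=\sum_{i\in[t]}\delta_i$ as node labels. Your write-up is in fact slightly more explicit than the paper's about the bookkeeping (the two feasibility checks $\delta_t\le\Delta_t$ and $\sigma_t\le\Gamma^c$), but the argument is the same.
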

 \begin{proof}
  A trivial verification shows that each path from $\mathfrak{s}$ to
 $\mathfrak{t}$  in~$G$ (its first~$T$ arcs) 
 models an integral feasible solution~$\pmb{\delta}=(\delta_t)_{t\in[T]}$ to~(\ref{auc1})-(\ref{auc3}) 
 and vise a versa  if $\Gamma^c,\Delta_t\in \Zset_{+}$, $t\in [T]$
 (see Lemma~\ref{ipcb}).
 Indeed, consider any $\mathfrak{s}$-$\mathfrak{t}$ path in~$G$, by the construction of~$G$, its form is as follows:
 $\mathfrak{s}=0^0\leadsto1^{0+\delta_1} \leadsto 2^{(0+\delta_1)+\delta_2} \leadsto
 \cdots \leadsto (t-1)^{(0+\sum_{k\in[t-2]}\delta_k)+\delta_{t-1}}  \leadsto t^{0+(\sum_{k\in[t-1]}\delta_k)+\delta_t} 
 \leadsto  \cdots \leadsto
 (T-1)^{(0+\sum_{k\in[T-2]}\delta_k)+\delta_{T-1}} \leadsto T^{(0+\sum_{k\in[T-1]}\delta_k)+\delta_{T}} \leadsto \mathfrak{t}$, where
 $\delta_1\leq\Delta_1,\delta_2\leq\Delta_2,\ldots,\delta_t\leq\Delta_t,
 \ldots, \delta_T\leq\Delta_T$, $\delta_{t}\in \Zset_{+}$, and 
 its arcs  $((t-1)^{(0+\sum_{k\in[t-2]}\delta_k)+\delta_{t-1}} ,t^{0+(\sum_{k\in[t-1]}\delta_k)+\delta_t})\in A$, since
 $0+(\sum_{k\in[t-1]}\delta_k)+\delta_t\leq \Gamma^c$, $t\in [T]$. Thus
 the total amount of uncertainty  to
 cumulative demands along this path (along its first~$T$ arcs) is at most~$\Gamma^c$, i.e. $\sum_{t\in[T]}\delta_t\leq \Gamma^c$.
 Furthermore, it follows from~(\ref{cijdag}) that the cost of this path is equal to
 the value of
 the objective function~(\ref{auc1}) for $\pmb{\delta}$.
 Accordingly, the costs of an optimal solution to~(\ref{auc1})-(\ref{auc3}) and the length of a longest path in~$G$ are
 the same, equal to~$C^*$.
 \end{proof}
From Lemma~\ref{laduc} and Proposition~\ref{padvlp} it follows that
solving the \textsc{Adv} problem boils down to finding a longest path
from $\mathfrak{s}$ to $\mathfrak{t}$ in~$G$ built, which can be done in $O(|A|+|V|)$ (see, e.g.,~\cite{AMO93}).
Taking into account the running time required to construct~$G$ and finding a longest path in~$G$ we obtain
the following theorem:
\begin{thm}
Suppose that $\Gamma^c,\Delta_t\in \Zset_{+}$, $t\in [T]$. Then
the \textsc{Adv} problem  under $\mathcal{U}^c$, for the non-overlapping case,
 can be solved in $O(T\Gamma^c\Delta_{\max})$ time.
 \label{tcadv}
\end{thm}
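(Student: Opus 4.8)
The plan is to assemble three facts already in hand — the reduction of \textsc{Adv} under $\mathcal{U}^c$ to the separable convex knapsack problem~(\ref{auc1})-(\ref{auc3}) (Lemma~\ref{laduc}), the integrality of its optimal solutions when $\Gamma^c,\Delta_t\in\Zset_{+}$ (Lemma~\ref{ipcb}), and its encoding as a longest-path problem in a layered acyclic graph (Proposition~\ref{padvlp}) — and then simply bound the size of that graph, the cost of building it, and the cost of solving the longest-path problem on it.

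First I would invoke Lemma~\ref{laduc}, reducing \textsc{Adv} to~(\ref{auc1})-(\ref{auc3}) and noting that an optimal scenario $\pmb{D}^*$ is recovered from an optimal $\pmb{\delta}^*$ via~(\ref{optapvc}) in $O(T)$ time. Since $\Gamma^c$ and all $\Delta_t$ are nonnegative integers, Lemma~\ref{ipcb} guarantees an optimal $\pmb{\delta}^*$ with $\delta^*_t\in\{0,1,\ldots,\Delta_t\}$, so it suffices to optimize over integral $\pmb{\delta}$ — exactly the regime in which Proposition~\ref{padvlp} gives a one-to-one, value-preserving correspondence between $\mathfrak{s}$--$\mathfrak{t}$ paths in the constructed graph $G=(V,A)$ and integral feasible solutions of~(\ref{auc1})-(\ref{auc3}), so that a longest $\mathfrak{s}$--$\mathfrak{t}$ path yields an optimal $\pmb{\delta}^*$ and hence, via~(\ref{optapvc}), an optimal solution to \textsc{Adv}.

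The remaining work is the accounting. Graph $G$ has $T+2$ layers; each layer $V_t$, $t\in[T]$, has at most $\Gamma^c+1$ nodes, so $|V|=O(T\Gamma^c)$, and each node $(t-1)^\delta$ emits at most $\Delta_t+1\le\Delta_{\max}+1$ arcs into the next layer (plus $O(\Gamma^c)$ zero-cost arcs into $\mathfrak{t}$), so $|A|=O(T\Gamma^c\Delta_{\max})$. Building $G$, including evaluating each arc cost via~(\ref{cijdag}), therefore takes $O(T\Gamma^c\Delta_{\max})$ time. A longest path in a directed acyclic graph is computable in $O(|V|+|A|)$ time by relaxing arcs in the topological order supplied for free by the layering, i.e. in $O(T\Gamma^c\Delta_{\max})$ time; the $O(T)$ post-processing to build $\pmb{D}^*$ from $\pmb{\delta}^*$ is absorbed. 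Summing the construction and longest-path bounds gives the claimed $O(T\Gamma^c\Delta_{\max})$ running time. Since all the structural content is already proved, there is no genuine obstacle; the only point needing care is the node/arc count — in particular that the per-node out-degree is $O(\Delta_{\max})$, not $O(\Gamma^c)$, which is precisely what the definition of $A$ in terms of $\delta_t\in\{0,\ldots,\Delta_t\}$ ensures.
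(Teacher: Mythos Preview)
Your proposal is correct and follows essentially the same approach as the paper: invoke Lemma~\ref{laduc} and Proposition~\ref{padvlp} to reduce \textsc{Adv} to a longest-path computation in the layered DAG~$G$, bound $|V|=O(T\Gamma^c)$ and $|A|=O(T\Gamma^c\Delta_{\max})$, and use the standard $O(|V|+|A|)$ longest-path algorithm for DAGs. The paper presents this accounting in the paragraph immediately preceding the theorem rather than as a formal proof, but the content is identical.
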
 
There are some polynomially solvable cases of the \textsc{Adv} problem
(problem~(\ref{auc1})-(\ref{auc3})).
The first one is obvious, namely, when $\Gamma^c$ is bounded by a polynomial of
the problem size (notice that $\Delta_{\max}\leq\Gamma^c$).
The second case is the uniform one, i.e. bounds $\Delta_t=\Delta$ for
every $t\in [T]$, and then
 the inner problem (\ref{cauc1})-(\ref{cauc3}) can be solved  in $O(T^2)$ 
time~\cite{LPR14}. The last case,
when $c_t(\delta)$, in the objective function~(\ref{cauc1}), is linear 
(see Figure~\ref{fcosts}a) for every $t\in [T]$. Then   
the problem (\ref{cauc1})-(\ref{cauc3}) becomes a 
continuous knapsack problem with separable linear utilities that can be solved in $O(T)$ time
(see, e.g.,~\cite{KV12}). 

It turns out that
if $\Gamma^c,\Delta_t\in \Zset_{+}$, $t\in [T]$, then
 the \textsc{Adv} problem has a fully polynomial
approximation scheme 
(FPTAS)\footnote{A maximization (resp. minimization) problem has an FPTAS if for each~$\epsilon>0$ and every its instance~$I$
the inequality $OPT(I)\leq (1+\epsilon) c(I)$
(resp. $c(I)\leq (1+\epsilon) OPT$)
 holds,
where $OPT(I)$ is the optimal cost of~$I$ and $c(I)$ is the cost returned by an approximation algorithm
whose running time is polynomial in both  $1/\epsilon$ and the size of~$I$.
It is assumed that  the cost of each possible solution of the problem is nonnegative.}.
Indeed,
the existence of the FPTAS  follows
from the fact that  problem  (\ref{cauc1})-(\ref{cauc3}) with the integral property is
a special case of the nonlinear knapsack
problem with a separable nondecreasing objective function, a separable nondecreasing
packing (budget) constraint and integer variables that admits an FPTAS (see~\cite{HKLOL14}).
\begin{cor}
Suppose that $\Gamma^c,\Delta_t\in \Zset_{+}$, $t\in [T]$. Then
the \textsc{Adv} problem  under $\mathcal{U}^c$, for the non-overlapping case, admits 
 an FPTAS.
 \label{cfadv}
\end{cor}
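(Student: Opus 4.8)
The plan is to reduce the \textsc{Adv} problem, in the equivalent form supplied by Lemma~\ref{laduc}, to an integer separable nonlinear knapsack problem, and then to invoke the FPTAS for that problem class recalled above (from~\cite{HKLOL14}). The only genuinely new points are to check that its hypotheses are met and that the additive constant appearing in the reformulation does not destroy the multiplicative guarantee.

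First I would assemble the chain of reductions already established in this section. By Lemma~\ref{laduc} solving \textsc{Adv} is equivalent to solving~(\ref{auc1})-(\ref{auc3}), with an optimal scenario reconstructed from an optimal $\pmb{\delta}^*$ via~(\ref{optapvc}); by the reformulation~(\ref{aucr1})-(\ref{aucr3}) the objective equals $A+\sum_{t\in[T]}c_t(\delta_t)$, where $A$ is a constant and each $c_t$ is the piecewise-linear function defined there. Since $\Gamma^c,\Delta_t\in\Zset_+$, Lemma~\ref{ipcb} provides an optimal solution of the inner problem~(\ref{cauc1})-(\ref{cauc3}) with $\delta^*_t\in\{0,1,\dots,\Delta_t\}$, so it is enough to solve the integer program
\[
\max\Big\{\,\textstyle\sum_{t\in[T]}c_t(\delta_t)\ :\ \sum_{t\in[T]}\delta_t\le\Gamma^c,\ \ \delta_t\in\{0,1,\dots,\Delta_t\},\ t\in[T]\,\Big\}.
\]
I would then verify that this is an instance of the class handled in~\cite{HKLOL14}: the objective is separable; each $c_t$ is nonnegative, convex and satisfies $c_t(0)=0$, hence is nondecreasing on $[0,\Delta_t]$ (a nonnegative convex function vanishing at the left endpoint is nondecreasing there, since an interior point is a convex combination of $0$ and the right endpoint); the single packing constraint $\sum_t\delta_t\le\Gamma^c$ is separable and nondecreasing; and the variables are bounded integers. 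Hence, for every $\epsilon>0$,~\cite{HKLOL14} returns a feasible $\widehat{\pmb{\delta}}$ with value $\widehat{Z}$ such that $Z^*\le(1+\epsilon)\widehat{Z}$, where $Z^*$ is the optimum of the displayed program, in time polynomial in $1/\epsilon$ and the input size.

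It then remains to transfer the guarantee back to \textsc{Adv}. We output $A+\widehat{Z}$ together with the scenario obtained from $\widehat{\pmb{\delta}}$ through~(\ref{optapvc}). The optimal \textsc{Adv} cost equals $A+Z^*$ with $Z^*\ge0$ (attained at $\pmb{\delta}=\pmb{0}$) and $A\ge0$, because $A$ is precisely the cost of the fixed plan $\pmb{x}$ under the nominal scenario $\widehat{\pmb{D}}\in\mathcal{U}^c$, which is nonnegative by the standing nonnegativity assumption on solution costs. Therefore $A+Z^*\le A+(1+\epsilon)\widehat{Z}\le(1+\epsilon)(A+\widehat{Z})$, so $A+\widehat{Z}$ approximates the \textsc{Adv} optimum within a factor $1+\epsilon$, and the whole procedure runs in time polynomial in $1/\epsilon$ and the input size; this is the claimed FPTAS.

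The hard part is bookkeeping rather than mathematics: one has to make sure the closed forms of $f_I$ and $f_B$ deliver the $c_t$ in whatever oracle form~\cite{HKLOL14} requires, that monotonicity and nonnegativity truly hold, and that the constant $A$ stays nonnegative so the multiplicative bound is preserved. If a self-contained argument were wanted, the same result would follow from a scaling-and-dynamic-programming scheme \`a la the classical knapsack FPTAS --- round accumulated objective values to multiples of $\epsilon L/T$ for a polynomially computable positive lower bound $L$ on the optimum (handling optimum $0$ separately), and run a dynamic program over periods that records, for each rounded profit level, the minimum budget consumed --- but the reduction to~\cite{HKLOL14} is shorter.
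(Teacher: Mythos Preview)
Your proposal is correct and follows essentially the same route as the paper: reduce \textsc{Adv} via Lemma~\ref{laduc} and the reformulation~(\ref{aucr1})-(\ref{aucr3}) to the integer separable knapsack~(\ref{cauc1})-(\ref{cauc3}) (using Lemma~\ref{ipcb}), invoke the FPTAS of~\cite{HKLOL14}, and then absorb the nonnegative additive constant~$A$ into the $(1+\epsilon)$ bound exactly as you do. Your write-up is in fact more careful than the paper's in explicitly verifying the monotonicity and nonnegativity hypotheses needed for~\cite{HKLOL14}.
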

\begin{proof}
Let $\pmb{\delta}^*$ be an optimal solution to (\ref{aucr1})-(\ref{aucr3}) (equivalently to  the \textsc{Adv} problem) and
$\pmb{\delta}^{'}$  be a solution to (\ref{cauc1})-(\ref{cauc3}) returned by an FPTAS proposed in~\cite{HKLOL14}.
Obviously, the running time  of the FPTAS for  (\ref{aucr1})-(\ref{aucr3} is the same as  the one for  (\ref{cauc1})-(\ref{cauc3}).
We  only need to show that the inequality
$A+\sum_{t\in [T]} c_t(\delta^{*}_t)\leq (1+\epsilon)(A+\sum_{t\in [T]} c_t(\delta^{'}_t))$ holds for every $\epsilon>0$.
There is no loss of generality in assuming $A\geq 0$. Hence and from the fact that $\pmb{\delta}^{'}$ is an 
approximate solution to (\ref{cauc1})-(\ref{cauc3}), we get
$A+\sum_{t\in [T]} c_t(\delta^{*}_t)\leq A+(1+\epsilon)\sum_{t\in [T]} c_t(\delta^{'}_t)
\leq (1+\epsilon)(A+\sum_{t\in [T]} c_t(\delta^{'}_t))$.
\end{proof}

We now deal with the \textsc{MinMax} problem.  
Theorem~\ref{thmcompladv} immediately 
yields the following corollary. 
\begin{cor}
The \textsc{MinMax} problem under $\mathcal{U}^c$, for the non-overlapping case, is weakly NP-hard.
\label{cmmor}
\end{cor}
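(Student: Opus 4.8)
The plan is to exhibit \textsc{Adv} as the special case of \textsc{MinMax} in which the set of admissible production plans $\Xset$ is a single point, and then simply invoke Theorem~\ref{thmcompladv}. Concretely, I would reuse verbatim the polynomial-time reduction from \textsc{Subset Sum} built in the proof of Theorem~\ref{thmcompladv}: given $\{a_1,\dots,a_n\}$ and $b$, set $T=n$, $c^P=c^I=0$, $c^B=2$, $b^P=0$, $\widehat D_t=tA$ with $A=\sum_{i\in[n]}a_i$, $\Delta_t=a_t$, $\Gamma^c=b$, and let the cumulative production be $X_t=tA+\tfrac12 a_t$. The one new point to verify is that this cumulative-production vector comes from a genuine feasible production plan: putting $x_1=X_1$ and $x_t=X_t-X_{t-1}$ for $t\ge 2$ gives $x_1=A+\tfrac12 a_1>0$ and $x_t=A+\tfrac12 a_t-\tfrac12 a_{t-1}\ge\tfrac12 A>0$ (since $A\ge a_{t-1}$), so $\pmb{x}\in\Rset^T_{+}$, and the singleton $\Xset=\{\pmb{x}'\in\Rset^T_{+}:\,x'_t=x_t,\ t\in[T]\}$ is a legitimate polyhedron described by linear equality constraints.

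With $\Xset$ fixed to this singleton, the inner maximization over $\pmb{D}\in\mathcal{U}^c$ in~(\ref{pmm}) is exactly the adversarial instance~(\ref{padv}) for the plan $\pmb{x}$, hence $OPT$ equals its optimal value. By the computation in the proof of Theorem~\ref{thmcompladv}, this value equals $A+z^\ast$, where here $A=\sum_{t\in[T]}\max\{0,2(\widehat D_t-X_t)\}=\sum_{t\in[T]}\max\{0,-a_t\}=0$ and $z^\ast$ is the optimum of~(\ref{caucnp1})--(\ref{caucnp3}), which satisfies $z^\ast\le b$ with $z^\ast=b$ if and only if the \textsc{Subset Sum} instance is a yes-instance. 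Therefore the decision question ``is $OPT\ge b$?'' for \textsc{MinMax} under $\mathcal{U}^c$ in the non-overlapping case is equivalent to \textsc{Subset Sum}; as all numerical data are polynomially bounded in $A$ and $b$, this establishes weak NP-hardness and proves the corollary.

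I do not anticipate any real obstacle here: the argument is essentially the one-line remark ``\textsc{Adv} is \textsc{MinMax} with $|\Xset|=1$'', so weak NP-hardness of \textsc{Adv} (Theorem~\ref{thmcompladv}) transfers immediately; the only bookkeeping is the trivial verification above that the fixed $X$-vector used in that theorem is induced by a nonnegative period-wise plan. Alternatively one may avoid re-deriving $z^\ast$ altogether and argue purely at the level of reductions: any algorithm solving \textsc{MinMax} with $\Xset$ a singleton solves \textsc{Adv}, so the weak NP-hardness of \textsc{Adv} in the non-overlapping case forces the same for \textsc{MinMax}.
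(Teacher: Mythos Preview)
Your proposal is correct and follows exactly the paper's approach: the paper simply states that Theorem~\ref{thmcompladv} ``immediately yields'' the corollary, and your argument makes explicit the obvious mechanism (take $\Xset$ to be the singleton containing the plan $\pmb{x}$ from that theorem's reduction, so that \textsc{MinMax} collapses to \textsc{Adv}). The extra bookkeeping you do---checking that the cumulative productions $X_t=tA+\tfrac12 a_t$ arise from a nonnegative period-wise plan---is a useful detail the paper leaves implicit.
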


We now provide some positive results for the \textsc{MinMax} problem.
We propose an ellipsoid algorithm based approach, adapted from~\cite{ASNP16},
where a similar class of robust problems (with a different scenario set)  
has been studied. The \textsc{MinMax} problem (see (\ref{pmm})) can be formulated as the following 
convex programing model:
\begin{align}
 \min\; & \alpha &   \label{gpmmc1}\\
   \text{s.t. }   &F(\pmb{x},\pmb{D})\leq \alpha, &\pmb{D} \in \mathcal{U}^c, \label{gpmmc2}\\
    &\pmb{x}\in \Xset, & \label{gpmmc3}
\end{align} 
where $F(\pmb{x},\pmb{D})=\sum_{t\in [T]}\max\{f_I(X_t,D_t), f_B(X_t,D_t) \}$ is a convex function (recall that $X_t=\sum_{i\in [t]} x_i$).
Thus the above  program has infinitely many convex constraints of the form~(\ref{gpmmc2}) and
together with the linear constraints~(\ref{gpmmc3}) that describe a convex set. One can solve (\ref{gpmmc1})-(\ref{gpmmc3}) by the ellipsoid algorithm (see,~e.g.,~\cite{GLS93}).
By the equivalence of optimization and separation (see,~e.g.,~\cite{GLS93}), 
we need only a \emph{separation oracle} for the convex set~$\Pset$ determined by the constraints
 (\ref{gpmmc2}) and (\ref{gpmmc3}), i.e. a procedure, which for given 
 $(\pmb{x}^{*}, \alpha^{*})\in \Rset^{T+1}$,
 either decide that  $(\pmb{x}^{*}, \alpha^{*})\in \Pset$ or
 return a \emph{separating hyperplane} between $\Pset$ and $(\pmb{x}^{*}, \alpha^{*})$.
Write $\Pset=\Pset_{\text{\textsc{Adv}}}\cap \Xset$,
 where $\Pset_{\text{\textsc{Adv}}}$ is a convex set
 corresponding to constraints~(\ref{gpmmc2}).
 Clearly, checking if $(\pmb{x}^{*}, \alpha^{*})\in \Xset$ or forming a separating hyperplane, 
  if $(\pmb{x}^{*}, \alpha^{*})\not\in \Xset$, that  boils down to
 detecting a violated constraint by $(\pmb{x}^{*}, \alpha^{*})$, 
 can be trivially  done in polynomial time, since $\Xset$ is explicitly given by a polynomial number of linear constraints.
 While either deciding that $(\pmb{x}^{*}, \alpha^{*})\in \Pset_{\text{\textsc{Adv}}}$ or
forming a separating  hyperplane 
 relies on solving the \textsc{Adv} problem  for a $\pmb{x}^{*}\in \Xset$.
Indeed, if $F(\pmb{x}^{*},\pmb{D}^{*})=\max_{\pmb{D}\in \mathcal{U}^c}F(\pmb{x}^*,\pmb{D})
\leq \alpha^{*}$ then $(\pmb{x}^{*}, \alpha^{*})\in \Pset_{\text{\textsc{Adv}}}$. Otherwise,
 a separating hyperplane is of the form:
 $\sum_{t\in [T]}F_t(\pmb{x},\pmb{D}^{*})-\alpha^{*}=0$, where
 $F_t(\pmb{x},\pmb{D}^{*}) = f_I(X_t,D^{*}_t)$ if
 $f_I(X^{*}_t,D^{*}_t) > f_B(X^{*}_t,D^{*}_t)$ and $F_t(\pmb{x},\pmb{D}^{*}) = f_B(X_t,D^{*}_t)$, 
 otherwise.
 The overall running time of the algorithm for solving (\ref{gpmmc1})-(\ref{gpmmc3}) depends on
 the running time of an algorithm for the \textsc{Adv} problem applied, since
 the ellipsoid algorithm performs a polynomial number of operations and calls to our separation oracle.
 On account of  the above remark and by Theorem~\ref{tcadv}, we get the following result:
\begin{thm}
Suppose that $\Gamma^c,\Delta_t\in \Zset_{+}$, $t\in [T]$. Then
the \textsc{MinMax} problem  under $\mathcal{U}^c$, for the non-overlapping case, 
 can be solved in a pseudopolynomial time.
 \label{tmmuc}
 \end{thm}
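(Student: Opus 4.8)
The plan is to realize the \textsc{MinMax} problem as the convex program~(\ref{gpmmc1})--(\ref{gpmmc3}) and to solve it with the ellipsoid method, exploiting the polynomial-time equivalence between optimization and separation over a convex body (see~\cite{GLS93}), exactly as was done in~\cite{ASNP16} for a related class of robust problems. Recall that $F(\pmb{x},\pmb{D})=\sum_{t\in[T]}\max\{f_I(X_t,D_t),f_B(X_t,D_t)\}$ (with $X_t=\sum_{i\in[t]}x_i$) is convex in $\pmb{x}$, so the region $\Pset=\Pset_{\text{\textsc{Adv}}}\cap\Xset$ cut out by~(\ref{gpmmc2})--(\ref{gpmmc3}) is convex; since $\Xset\subseteq\Rset^T_+$ is given by an explicit finite list of linear inequalities and the variable $\alpha$ can be bounded a~priori (e.g.\ between $0$ and the cost of some fixed $\pmb{x}\in\Xset$ evaluated against the extreme scenario $\widehat{\pmb{D}}\pm\pmb{\Delta}$, which is computed directly), $\Pset$ can be enclosed in a ball of encoding length polynomial in the input, putting us in the standard regime where the ellipsoid method terminates after polynomially many iterations, each requiring one call to a separation oracle plus polynomially many arithmetic operations.

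First I would build the separation oracle for $\Pset$. Given a query point $(\pmb{x}^{*},\alpha^{*})$, test whether $\pmb{x}^{*}\in\Xset$: as $\Xset$ is an explicit polyhedron, this check and the return of a violated linear inequality as a separating hyperplane take polynomial time. If $\pmb{x}^{*}\in\Xset$, solve the \textsc{Adv} problem for the fixed plan $\pmb{x}^{*}$; by Theorem~\ref{tcadv} this runs in $O(T\Gamma^c\Delta_{\max})$ time (here the hypotheses $\Gamma^c,\Delta_t\in\Zset_{+}$ and the non-overlapping assumption enter), and it produces a worst-case scenario $\pmb{D}^{*}\in\mathcal{U}^c$ together with the value $F(\pmb{x}^{*},\pmb{D}^{*})=\max_{\pmb{D}\in\mathcal{U}^c}F(\pmb{x}^{*},\pmb{D})$. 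If $F(\pmb{x}^{*},\pmb{D}^{*})\leq\alpha^{*}$, then $(\pmb{x}^{*},\alpha^{*})\in\Pset$. Otherwise $\sum_{t\in[T]}F_t(\pmb{x},\pmb{D}^{*})-\alpha=0$ is a valid separating hyperplane, where $F_t(\pmb{x},\pmb{D}^{*})=f_I(X_t,D^{*}_t)$ if $f_I(X^{*}_t,D^{*}_t)>f_B(X^{*}_t,D^{*}_t)$ and $F_t(\pmb{x},\pmb{D}^{*})=f_B(X_t,D^{*}_t)$ otherwise: each $F_t(\cdot,\pmb{D}^{*})$ is an affine function that minorizes $\max\{f_I(X_t,D^{*}_t),f_B(X_t,D^{*}_t)\}$ and is tight at $\pmb{x}^{*}$, so $\sum_t F_t(\pmb{x},\pmb{D}^{*})\leq F(\pmb{x},\pmb{D}^{*})\leq\alpha$ for every $(\pmb{x},\alpha)\in\Pset$, while $\sum_t F_t(\pmb{x}^{*},\pmb{D}^{*})=F(\pmb{x}^{*},\pmb{D}^{*})>\alpha^{*}$ at the query point.

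Next I would assemble the complexity estimate: the ellipsoid method calls the oracle polynomially many times (in the encoding length of the data and in $\log(1/\varepsilon)$), one oracle call costs at most $O(T\Gamma^c\Delta_{\max})$ plus a polynomial term, and since $\Delta_{\max}\leq\Gamma^c$ this is bounded by a polynomial in the input size and in $\Gamma^c$; hence the whole procedure is pseudopolynomial and returns an optimal production plan $\pmb{x}^{*}$ together with its worst-case cost $OPT$.

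I expect the main obstacle to be the technical bookkeeping needed to invoke the ellipsoid machinery rigorously rather than any genuine difficulty: exhibiting an explicit ball enclosing $\Pset$ (or handling lower-dimensional faces by a small rational perturbation), controlling the encoding length of all data so the iteration bound is honestly polynomial, and verifying that the affine minorant built from $\pmb{D}^{*}$ is correct even in the degenerate case $f_I(X^{*}_t,D^{*}_t)=f_B(X^{*}_t,D^{*}_t)$. These are precisely the routine conditions treated in~\cite{GLS93} and already exploited in~\cite{ASNP16}, so the work is in checking them, not in discovering anything new; the substance of the proof is the reduction of the separation step to the \textsc{Adv} problem and the pseudopolynomial bound of Theorem~\ref{tcadv}.
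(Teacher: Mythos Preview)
Your proposal is correct and follows essentially the same approach as the paper: both solve the convex program~(\ref{gpmmc1})--(\ref{gpmmc3}) via the ellipsoid method, using the pseudopolynomial \textsc{Adv} algorithm of Theorem~\ref{tcadv} as the separation oracle for $\Pset_{\text{\textsc{Adv}}}$ and the explicit linear description of $\Xset$ for the remaining constraints, with the same affine-minorant construction of the separating hyperplane. Your write-up is in fact a bit more careful about the ellipsoid bookkeeping (bounding $\alpha$, enclosing $\Pset$) than the paper's own argument.
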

 Accordingly, for all  the  polynomial solvable cases of
the \textsc{Adv} problem, aforementioned in this section, one can obtain polynomial algorithms for the \textsc{MinMax} problem.
An alternative approach to solve the \textsc{MinMax} problem is a  linear programming formulation with pseudopolynomial number of constraints  and variables, which is
a constructive proof of Theorem~\ref{tmmuc}.
 Assuming that 
 $\Gamma^c,\Delta_t\in \Zset_{+}$, $t\in [T]$, we can reduce  
   problem~(\ref{auc1})-(\ref{auc3}),
that corresponds to~\textsc{Adv}, to  finding
a longest path in layered weighted graph~$G=(V,A)$ (see  Proposition~\ref{padvlp}). We can use the same reasoning as in the previous section and build the following linear programming problem for \textsc{MinMax}:
 \begin{align}
 \min\; & \pi_{\mathfrak{t}}&  \label{adlp1}\\ 
\text{s.t. }&\pi^{\delta+\delta_t}_{t}-\pi_{(t-1)}^{\delta}\geq f_I(X_t,\widehat{D}_t-\delta_t),& ((t-1)^{\delta},t^{\delta+\delta_t})\in A ,\label{adlp2}\\ 
		&\pi^{\delta+\delta_t}_{t}-\pi_{(t-1)}^{\delta}\geq f_B(X_t,\widehat{D}_t+\delta_t),& ((t-1)^{\delta},t^{\delta+\delta_t})\in A , \label{adlp2a}\\             
               &\pi_{\mathfrak{t}}-\pi_{T}^{\delta}\geq  0,& (T^{\delta},\mathfrak{t})\in A,  \label{adlp6}\\ 
                &\pi_{0}^0=0, & \label{adlp8}
\end{align}
where $\pi^{\delta}_t$ is unrestricted variable associated with node $t^\delta$  and $\pi_{\mathfrak{t}}$ is unrestricted variable associated with node $\mathfrak{t}$ of $G$.
The number of constraints and variables  in (\ref{adlp1})-(\ref{adlp8}) is $O(T\Gamma^c\Delta_{\max})$.
Now adding linear constraints 
$X_t=\sum_{i\in [t]} x_i$, $t\in [T]$, and  $\pmb{x}\in\Xset$ to (\ref{adlp1})-(\ref{adlp8}) 
 gives a linear program for 
the \textsc{MinMax} problem with  a pseudopolynomial number of constraints and variables.
It is worth pointing out that all  the  polynomially solvable cases of
the \textsc{Adv} problem, presented in this section,  can be modeled by linear programs
with polynomial numbers of constraints and variables and thus they
 apply to  \textsc{MinMax} one as well.

We now show that there exists an FPTAS for  the \textsc{MinMax} problem.
It turns out that
the formulation (\ref{gpmmc1})-(\ref{gpmmc3})
 admits  an FPTAS if there exits an FPTAS for $\max_{\pmb{D} \in \mathcal{U}^c} F(\pmb{x},\pmb{D})$
  for a given $\pmb{x}\in \Xset$
  (the \textsc{Adv} problem).  This result can easily be  adapted from~\cite[Lemma 3.5]{ASNP16}. 
  Corollary~\ref{cfadv} now implies:
 \begin{cor}
Suppose that $\Gamma^c,\Delta_t\in \Zset_{+}$, $t\in [T]$. Then
the \textsc{MinMax} problem  under $\mathcal{U}^c$, for the non-overlapping case, admits 
 an FPTAS.
 \label{cfadv1}
\end{cor}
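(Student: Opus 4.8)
The plan is to follow the template of \cite[Lemma~3.5]{ASNP16}: show that an FPTAS for \textsc{Adv} propagates to an FPTAS for \textsc{MinMax} via the equivalence between (approximate) separation and (approximate) optimization over convex bodies. Recall from (\ref{gpmmc1})--(\ref{gpmmc3}) that \textsc{MinMax} is the problem of minimizing the linear functional $\alpha$ over the convex set
\[
\Pset=\{(\pmb{x},\alpha)\in\Rset^{T+1} : \pmb{x}\in\Xset,\ F(\pmb{x},\pmb{D})\le\alpha\ \text{for all}\ \pmb{D}\in\mathcal{U}^c\},
\]
where $F(\pmb{x},\pmb{D})=\sum_{t\in[T]}\max\{f_I(X_t,D_t),f_B(X_t,D_t)\}$ is convex and $X_t=\sum_{i\in[t]}x_i$. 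First I would build an approximate separation oracle for $\Pset$: given $(\pmb{x}^*,\alpha^*)$ and $\epsilon>0$, if $\pmb{x}^*\notin\Xset$ return one of the explicitly listed violated linear inequalities; otherwise run the FPTAS of Corollary~\ref{cfadv} to obtain $\pmb{D}'\in\mathcal{U}^c$ with $F(\pmb{x}^*,\pmb{D}')\le\max_{\pmb{D}\in\mathcal{U}^c}F(\pmb{x}^*,\pmb{D})\le(1+\epsilon)F(\pmb{x}^*,\pmb{D}')$ in time polynomial in the input size and $1/\epsilon$; if $F(\pmb{x}^*,\pmb{D}')>\alpha^*$ return the cut $\sum_{t\in[T]}F_t(\pmb{x},\pmb{D}')-\alpha^*=0$, which is valid since $F(\pmb{x},\pmb{D}')\le\alpha$ holds on $\Pset$, exactly as in the exact separation oracle given for (\ref{gpmmc1})--(\ref{gpmmc3}); and if $F(\pmb{x}^*,\pmb{D}')\le\alpha^*$, report that $(\pmb{x}^*,(1+\epsilon)\alpha^*)\in\Pset$, since then $\max_{\pmb{D}}F(\pmb{x}^*,\pmb{D})\le(1+\epsilon)\alpha^*$.

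Next I would feed this weak separation oracle to the ellipsoid method (equivalence of separation and optimization, see \cite{GLS93}): combined with a binary search on $\alpha$ over an interval $[\underline{\alpha},\overline{\alpha}]$ with polynomially bounded endpoints bracketing $OPT$, and after intersecting $\Xset$ with a data-dependent box to make $\Pset$ bounded, the method either certifies, for the current $\alpha$, that no production plan with worst-case cost at most $\alpha$ exists, or returns $\pmb{x}^*\in\Xset$ with $\max_{\pmb{D}\in\mathcal{U}^c}F(\pmb{x}^*,\pmb{D})\le(1+\epsilon)\alpha$. Driving $\alpha$ down to the threshold and then rescaling $\epsilon$ and the search precision by suitable polynomial factors, the returned plan has worst-case cost at most $(1+\epsilon)OPT$, and the total number of oracle calls and arithmetic operations is polynomial in the encoding length of the instance and in $1/\epsilon$. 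This is the promised FPTAS, and Corollary~\ref{cfadv} supplies its only problem-specific ingredient.

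I expect the structural steps above to be unproblematic; the real work is the quantitative bookkeeping that turns ``polynomially many ellipsoid iterations with an approximate oracle'' into a genuine FPTAS. One must exhibit explicit, polynomially bounded a~priori bounds $\underline{\alpha}\le OPT\le\overline{\alpha}$ from the problem data, a lower bound on the in-radius of $\Pset$ relative to its affine hull whenever it is nonempty (using that $\Xset$ is given by explicit linear inequalities and that $F$ is Lipschitz with a data-dependent constant) together with an upper bound on its out-radius; one must ensure the cost-nonnegativity convention underlying the definition of an FPTAS holds, normalizing the objective by a polynomially bounded additive constant if necessary, just as the constant $A$ is handled in the proof of Corollary~\ref{cfadv}; and one must track how the multiplicative error of the \textsc{Adv} FPTAS composes with the error of the binary search so that their combination is still of the form $(1+\epsilon)OPT$. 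Since every quantity involved is polynomially bounded in the input length and in $1/\epsilon$, the scheme runs within the required time bound, which proves the corollary.
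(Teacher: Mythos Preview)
Your proposal is correct and takes essentially the same approach as the paper: the paper's proof of this corollary consists of a single sentence invoking \cite[Lemma~3.5]{ASNP16} together with Corollary~\ref{cfadv}, and you have simply unpacked what that lemma does (approximate separation via the \textsc{Adv}-FPTAS fed into the ellipsoid method). Your additional bookkeeping remarks about radii, nonnegativity normalization, and error composition go beyond what the paper writes but are in the spirit of what the cited lemma requires.
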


\subsection{General case}

We now  drop the assumption 
$\widehat{D}_t+\Delta_t\leq \widehat{D}_{t+1}-\Delta_{t+1}$.
Theorem~\ref{thmcompladv} and Corollary~\ref{cmmor} now implie the following hardness results for both problems under consideration.
\begin{cor}
The \textsc{Adv} and \textsc{MinMax}  problems   for $\mathcal{U}^c$ 
in the general case
are weakly NP-hard.
\end{cor}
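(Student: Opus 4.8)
The plan is to obtain both hardness results as immediate consequences of Theorem~\ref{thmcompladv} and Corollary~\ref{cmmor} by observing that the non-overlapping case is a special case of the general one. First I would recall that an instance in the non-overlapping case, i.e.\ one satisfying $\widehat{D}_t+\Delta_t\leq \widehat{D}_{t+1}-\Delta_{t+1}$ for every $t\in[T-1]$, is in particular a valid instance of the general case, since the general case imposes no restriction on the interval bounds beyond $\widehat{D}_t\leq\widehat{D}_{t+1}$. Moreover, for such an instance the constraints $D_t\leq D_{t+1}$ in the definition of $\mathcal{U}^c$ are automatically redundant, so the scenario set $\mathcal{U}^c$ and hence the objective values of \textsc{Adv} and \textsc{MinMax} coincide whether we read the instance as a non-overlapping instance or as a general one.

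Next I would make the reduction explicit. Theorem~\ref{thmcompladv} established that \textsc{Adv} under $\mathcal{U}^c$ is weakly NP-hard already in the non-overlapping case, via a polynomial-time reduction from \textsc{Subset Sum}; since every instance produced by that reduction is also an instance of \textsc{Adv} in the general case with the same optimal value, the same reduction witnesses weak NP-hardness of \textsc{Adv} in the general case. Identically, Corollary~\ref{cmmor} gives weak NP-hardness of \textsc{MinMax} under $\mathcal{U}^c$ in the non-overlapping case, and the same instances serve as general-case instances, so \textsc{MinMax} in the general case is weakly NP-hard as well. (One should note that the reduction instances use bounded coefficients, so the hardness is genuinely weak NP-hardness, not merely NP-hardness; this is inherited directly.)

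There is essentially no obstacle here: the only thing to be careful about is the logical direction of the specialization — we need that the \emph{restricted} case embeds into the \emph{general} case, which it does, so hardness of the restricted case transfers upward. Concretely I would write: let $I$ be any instance constructed in the proof of Theorem~\ref{thmcompladv}; then $I$ satisfies the non-overlapping condition by construction, hence a fortiori the general-case assumptions, and the optimal values of \textsc{Adv} (resp.\ \textsc{MinMax}) on $I$ are the same under both readings; therefore a polynomial algorithm for \textsc{Adv} (resp.\ \textsc{MinMax}) in the general case would solve \textsc{Subset Sum} in polynomial time, which proves the corollary.
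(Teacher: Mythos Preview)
Your proposal is correct and matches the paper's own approach exactly: the paper simply states that Theorem~\ref{thmcompladv} and Corollary~\ref{cmmor} imply the corollary, and your argument spells out precisely this implication, namely that the non-overlapping instances constructed there are already general-case instances with the same scenario set~$\mathcal{U}^c$. Your extra care about the direction of specialization and the preservation of ``weak'' NP-hardness is appropriate but not something the paper elaborates on.
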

From an algorithmic point of view, the situation for the general case is more difficult  Recall, that for the non-overlapping case the model for the \textsc{Adv} problem has the integrality property (see Lemma~\ref{ipcb}), which allowed us to build pseudopolynomial algorithms. Now, in order to ensure that $\pmb{D}\in\mathcal{U}^c$, we have to add additional constraints $D_t\leq D_{t+1}$, $t\in [T-1]$ and the resulting model for the \textsc{Adv} problem takes the following form:
\begin{align}
 \max &\sum_{t\in [T]}\max\{ f_I(X_t, \widehat{D}_t+\delta_t), f_B(X_t, \widehat{D}_t+\delta_t)\}
 \label{gauc1}\\ 
\text{\rm s.t. }& \sum_{t\in [T]} |\delta_t|\leq \Gamma^c, & \label{gauc2}\\
                 &\widehat{D}_t+\delta_t\leq \widehat{D}_{t+1}+\delta_{t+1},& t\in [T-1],\label{gauc3}\\
                &- \Delta_t\leq \delta_t\leq \Delta_t, &t\in [T].\label{gauc4}
\end{align}
 The constraints   (\ref{gauc2})-(\ref{gauc4}) ensure that the
cumulative demand scenario~$\pmb{D}$, induced by $\pmb{\delta}$, belongs to~$\mathcal{U}^c$.
These  constraints determine 
a convex polytope. Since the objective function is convex,
 its maximum value  is attained at a vertex of this convex  polytope. However, we show an instance of the \textsc{Adv} problem which has no optimal solutions being integer, when $\Gamma^c,\Delta_t, \widehat{D}_t\in \Zset_{+}$, $t\in [T]$.
Let $T=3$, $x_1=x_2=0$, $x_3=5$, $c^B=2$, $c^I=1$, $c^P=b^P=0$, $\widehat{D}_1=3$,
$\widehat{D}_2=4$, $\widehat{D}_3=5$, $\Delta_1=3$, $\Delta_2=2$, $\Delta_3=1$ and
$\Gamma^c=4$. Now  the model  (\ref{gauc1})-(\ref{gauc4}) has the following form:
\begin{align}
\max\{14+2(\delta_1+\delta_2)-\delta_3\,:\,
&\delta_1-\delta_2\leq 1, \delta_2-\delta_3\leq 1,
|\delta_1|+|\delta_2|+|\delta_3|\leq 4, \nonumber \\
&\delta_1\in[-3,3],\delta_2\in[-2,2],\delta_3\in[-1,1] \label{exadv}
\}.
\end{align}
An easy computation  shows that $\delta^{*}_1=2\frac{1}{3}$,  $\delta^{*}_2=1\frac{1}{3}$ and 
 $\delta^{*}_3=\frac{1}{3}$ is an optimal vertex solution to (\ref{exadv}) and there is no
 optimal integer solution.
 
One can easily construct a mixed integer programming (MIP) counterpart of~(\ref{gauc1})-(\ref{gauc4})
 for the \textsc{Adv} problem,
 by linearizing (\ref{gauc1}) and (\ref{gauc2}), namely
 \begin{align}
 \max &\sum_{t\in [T]} \pi_t&\label{mgauc1}\\
  \text{\rm s.t. }&\pi_t \leq f_I(X_t, \widehat{D}_t+\delta_t) + M_t\gamma_t,         &t\in [T], \label{mgauc2}\\
                    &\pi_t \leq f_B(X_t, \widehat{D}_t+\delta_t)  + (1-M_t)\gamma_t ,  &t\in [T], \label{mgauc3}\\
                   & \sum_{t\in [T]} \beta_t \leq \Gamma^c, & \label{mgauc4}\\
                  &\delta_t\leq \beta_t, &t\in [T], \label{mgauc5}\\
                  &-\delta_t\leq \beta_t, &t\in [T],  \label{mgauc6}\\
                 &\widehat{D}_t+\delta_t\leq \widehat{D}_{t+1}+\delta_{t+1},& t\in [T-1], \label{mgauc7}\\
                &- \Delta_t\leq \delta_t\leq \Delta_t, &t\in [T], \label{mgauc8}\\
                &\beta_t \geq 0, \gamma_t\in \{0,1\},&t\in [T], \label{mgauc9}
\end{align}
 where $M_t$, $t\in [T]$, are suitably chosen large numbers.
 Unfortunately (\ref{mgauc1})-(\ref{mgauc9}) cannot be extended to a compact MIP for the \textsc{MinMax}  problem
  by using dualization.
 
 In order to cope with the \textsc{MinMax}  problem we construct a \emph{decomposition algorithm} that can be seen
 as a version of Benders' decomposition (similar algorithms 
 have been previously used in~\cite{ASNP16,AAAA17,BO08,GKZ12,SAP20,ZZ13}). The idea consists in solving a certain restricted \textsc{MinMax} problem iteratively, which provides exact or approximate solution. At each iteration an approximate production plan is computed. It is then evaluated by solving the \textsc{Adv} problem and the lower and upper bounds on the cost of an optimal production plan  for the original  \textsc{MinMax}  problem are improved.
 Consider the following linear programming program, called the master problem with $\Uset\subseteq \mathcal{U}^c$:
\begin{align}
\min \;&\alpha&\label{pmmc1}\\
\text{\rm s.t. }&\sum_{t\in [T]}\pi^{\pmb{D}}_t\leq \alpha,& \label{pmmc2}\\
&  f_I(X_t,D_t) \leq \pi^{\pmb{D}}_t, & t\in [T], \pmb{D} \in \Uset,  \label{pmmc3}\\
&   f_B(X_t,D_t) \leq \pi^{\pmb{D}}_t, & t\in [T],  \pmb{D} \in \Uset,  \label{pmmc4}\\
&   X_t=\sum_{i\in [t]} x_i, & t\in [T], \label{pmmc5}\\
&\pmb{x}\in \Xset. \label{pmmc6}
\end{align}
The constraints (\ref{pmmc2})-(\ref{pmmc4})  are the linearization of~(\ref{gpmmc2}).
Thus  (\ref{pmmc1})-(\ref{pmmc6}) is a relaxation of the \textsc{MinMax}  problem (for $\Uset=\mathcal{U}^c$
we get the \textsc{MinMax} one). An optimal solution~$\pmb{x}^{*}$ to  (\ref{pmmc1})-(\ref{pmmc6}) is an approximate
plan for \textsc{MinMax} and its quality is evaluated by solving the MIP model (\ref{mgauc1})-(\ref{mgauc9}).
In this way we get lower an upper bound on the optimal cost.
The formal description of the above decomposition procedure is presented in the form of Algorithm~\ref{aorp}.
\begin{algorithm}
\KwStep~0. $LB:=-\infty$, $UB:=+\infty$, $\Uset:=\{\widehat{\pmb{D}}\}$.\\
\KwStep~1. Solve the master problem (\ref{pmmc1})-(\ref{pmmc6}) with~$\Uset$ and derive an optimal solution
                   $(\pmb{x}^{*}, \alpha^{*})$ and update  $LB:=\alpha^{*}$. \\ 
\KwStep~2. Solve the \textsc{Adv} problem~(\ref{mgauc1})-(\ref{mgauc9}) for~$\pmb{x}^{*}$ and
                    derive an optimal (a worst-case) cumulative demand scenario~$\widehat{\pmb{D}}+\pmb{\delta}^{*}$ and 
                    update $UB:=\min\{UB, \sum_{t\in [T]}\max\{ f_I(X^{*}_t, \widehat{D}_t+\delta^{*}_t), 
                    f_B(X^{*}_t, \widehat{D}_t+\delta^{*}_t)\}\}$.\\
                          
\KwStep~3. If  $UB-LB\leq \epsilon$ then output~$\pmb{x}^{*}$ and \textbf{EXIT}.\\	
\KwStep~4. Update $\Uset:=\Uset \cup\{ \widehat{\pmb{D}}+\pmb{\delta}^{*}\}$
                    and go to Step~1.             
  \caption{A decomposition algorithm for   the \textsc{MinMax}  problem.}
  \label{aorp}
\end{algorithm}

Without loss  of generality we can assume that  $\mathcal{U}^c$  is a finite set containing only vertex scenarios,
since  the \textsc{Adv} problem attains its optimum at vertex scenarios.
This assumption ensures the convergence of  Algorithm~\ref{aorp} in  a finite number iterations.
More precisely, Algorithm~\ref{aorp} terminates after at most $O(|\mathcal{U}^c|)$ iterations~\cite[Proposition 2]{ZZ13}.
However, in practice, the decomposition based algorithms perform quite small number iterations
(see, e.g.,~\cite{AAAA17,BO08,GKZ12,SAP20,ZZ13}).

\section{Conclusions}
\label{con}
In this paper we have discussed 
 a capacitated  production planning  under uncertainty.
More specifically, we have studied   a version of the capacitated single-item lot sizing problem with backordering
under the budgeted cumulative demand uncertainty.  
We have considered two variants of the interval budgeted uncertainty representation
and used the minmax criterion to choose a best robust production plan.
For both variants, we have examined the problem of evaluating a given production plan in terms of its worst-case cost
(the \textsc{Adv} problem) and the problem of finding a robust production plan  along with its worst-case cost 
(the \textsc{MinMax}  one).
Under the discrete budgeted uncertainty, we have provided polynomial algorithms for 
the \textsc{Adv} problem  and polynomial linear programming based methods for  the \textsc{MinMax}  problem
in the non-overlapping case as well as in the general case. We have 
shown in this way that introducing uncertainty under the discrete budgeted model
does not make the problems
much computationally harder than their deterministic counterparts.
Under the continuous budgeted uncertainty the problems under consideration have different
properties than under the discrete budgeted one.
In particular, the \textsc{Adv} problem and in consequence the \textsc{MinMax}  one
have turned to be weakly NP-hard even in the non-overlapping case.
For the non-overlapping case we have constructed pseudopolynomial algorithms for the \textsc{Adv} problem
and proposed a pseudopolynomial ellipsoidal algorithm and a linear programming program
with  a pseudopolynomial number of constraints and variables for the \textsc{MinMax}  problem.
Furthermore, we have shown that both problems admit an FPTAS.
In the general case the problems still remain weakly NP-hard and, unfortunately,
there
is no easy characterization of vertex cumulative demand scenarios, namely
the integral property does not hold. We recall that this property has allowed us to build the pseudopolynomial methods
in the non-overlapping case. Accordingly, we have proposed a MIP model for the \textsc{Adv} problem and a~constraint generation algorithm for the \textsc{MinMax} problem in the general case.

There is still a number of open questions concerning the examined problems, in particular,
under the continuous budgeted uncertainty in the general case.
The  \textsc{Adv}  and \textsc{MinMax} problems are weakly NP-hard.
Thus  a full-fledged complexity 
analysis of the problems has to be carried out, i.e.  it is interesting to
 check  the existence
 pseudopolynomial algorithms, FPTASs or approximation algorithms for them.
 Furthermore, proposing a compact MIP model for the  \textsc{MinMax} problem is an interesting open problem.

\subsubsection*{Acknowledgements}
Romain Guillaume was  partially supported by the project caasc ANR-18-CE10-0012 of the French National Agency for Research.
Adam Kasperski and Pawe{\l} Zieli{\'n}ski were  supported by the National Science Centre, Poland, grant 2017/25/B/ST6/00486.

\end{document}